\newcommand{\reduce}{\textup{\textsf{reduce}}}
\newcommand{\union}{\textup{\textsf{union}}}
\newcommand{\estimateAndSample}{\textup{\textsf{estimateAndSample}}}
\newcommand{\approxMCNFAcore}{\textup{\textsf{countNFAcore}}}
\newcommand{\approxMCNFA}{\textup{\textsf{countNFA}}}
\newcommand{\computeCache}[1]{\textup{\textsf{computeCache}}(#1)}
\newcommand{\updateCache}[1]{\textup{\textsf{updateCache}}(#1)}
\newcommand{\pred}{\mathsf{pred}}
\newcommand{\ancestors}{\mathsf{ancestors}}
\newcommand{\Ex}{\textsf{E}}
\newcommand{\Va}{\textsf{Var}}
\renewcommand{\min}{\textsf{min}}
\newcommand{\calH}{\mathcal{H}}
\newcommand{\calA}{\mathcal{A}}
\newcommand{\calQ}{\mathcal{Q}}
\newcommand{\calT}{\mathcal{T}}
\newcommand{\calL}{\mathcal{L}}
\newcommand{\calS}{\mathcal{S}}
\newcommand{\cache}{\mathsf{cache}}
\newcommand{\transition}{\mathsf{transition}}
\newcommand{\paths}{\mathsf{run}}
\newcommand{\lcpn}{\mathsf{lcps}}
\newcommand{\median}{\mathsf{median}}
\newcommand{\vY}[4]{\mathfrak{S}^{#1}_{#2,#3}(#4)}
\newcommand{\vZ}[3]{\hat{\mathfrak{S}}^{#1}_{#2}(#3)}
\newcommand{\ns}{n_s}
\newcommand{\nt}{n_t}
\newtheorem{fact}{Fact}
\newtheorem{claim}{Claim}
\begin{document}

\title{Towards practical FPRAS for \#NFA: Exploiting the Power of Dependence}
\titlenote{All authors contributed equally to this research. The symbol \textcircled{r} denotes random author order. The publicly
	verifiable record of the randomization is available at \protect\url{https://www.aeaweb.org/journals/policies/random-author-order}}

\author{Kuldeep S. Meel} %
\affiliation{%
\institution{University of Toronto and Georgia Institute of Technology}
\city{Toronto}
\country{Canada}}
\email{meel@cs.toronto.edu}
\orcid{0000-0001-9423-5270}

\author{Alexis de Colnet} %
\affiliation{%
\institution{TU Wien}
\city{Vienna}\country{Austria}}
\email{decolnet@ac.tuwien.ac.at}
\orcid{0000-0002-7517-6735}

 \renewcommand{\shortauthors}{Kuldeep S. Meel \& Alexis de Colnet}

\begin{abstract}
\#NFA refers to the problem of counting the words of length $n$ accepted by a non-deterministic finite automaton. \#NFA is \#P-hard, and although fully-polynomial-time randomized approximation schemes (FPRAS) exist, they are all impractical. The first FPRAS for \#NFA had a running time of $\tilde{O}(n^{17}m^{17}\varepsilon^{-14}\log(\delta^{-1}))$, where $m$ is the number of states in the automaton, $\delta \in (0,1]$ is the confidence parameter, and $\varepsilon > 0$ is the tolerance parameter (typically smaller than $1$). The current best FPRAS achieved a significant improvement in the time complexity relative to the first FPRAS and obtained FPRAS with time complexity $\tilde{O}((n^{10}m^2 + n^6m^3)\varepsilon^{-4}\log^2(\delta^{-1}))$. The complexity of the improved FPRAS is still too intimidating to attempt any practical implementation. 
	
	In this paper, we pursue the quest for practical FPRAS for \#NFA by presenting a new  algorithm with a time complexity of $O(n^2m^3\log(nm)\varepsilon^{-2}\log(\delta^{-1}))$. Observe that evaluating whether a word of length $n$ is accepted by an NFA has a time complexity of $O(nm^2)$. Therefore, our proposed FPRAS achieves sub-quadratic complexity with respect to membership checks.
\end{abstract}

\begin{CCSXML}
<ccs2012>
   <concept>
       <concept_id>10003752.10003766.10003776</concept_id>
       <concept_desc>Theory of computation~Regular languages</concept_desc>
       <concept_significance>500</concept_significance>
       </concept>
   <concept>
       <concept_id>10003752.10010070.10010111.10011711</concept_id>
       <concept_desc>Theory of computation~Database query processing and optimization (theory)</concept_desc>
       <concept_significance>500</concept_significance>
       </concept>
   <concept>
       <concept_id>10002950.10003648.10003671</concept_id>
       <concept_desc>Mathematics of computing~Probabilistic algorithms</concept_desc>
       <concept_significance>500</concept_significance>
       </concept>
   <concept>
       <concept_id>10002950.10003648.10003700</concept_id>
       <concept_desc>Mathematics of computing~Stochastic processes</concept_desc>
       <concept_significance>500</concept_significance>
       </concept>
 </ccs2012>
\end{CCSXML}

\ccsdesc[500]{Theory of computation~Regular languages}
\ccsdesc[500]{Theory of computation~Database query processing and optimization (theory)}
\ccsdesc[500]{Mathematics of computing~Probabilistic algorithms}
\ccsdesc[500]{Mathematics of computing~Stochastic processes}
\keywords{Non-deterministic Finite Automaton, NFA, Model Counting, FPRAS}
\maketitle

\section{Introduction}\label{section:introduction}

In this paper, we focus on the following computational problem:

\begin{description}
	\item[\quad\quad{\bf\#NFA}:] Given a non-deterministic finite automaton (NFA) $\calA = (\calQ,q_I,q_F,\calT)$ over the binary alphabet $\Sigma = \{0,1\}$ with $m$ states, and a number $n$ in unary, determine $|\calL_n(\calA)|$ the number of words of length $n$ accepted by $\calA$.
\end{description}

The notion of finite automaton is one of the fundamental notions in Computer Science, and therefore, serves as fundamental object for representation of computational processes. 
We briefly review some of the applications of \#NFA in the context of databases community. The description of these applications is largely borrowed from earlier works on \#NFA~\cite{MeelCM24}.

\paragraph{Probabilistic Query Evaluation}

Given a query $Q$ and a probabilistic database $D$, probabilistic query evaluation (PQE) is the problem of computing the probability that $Q$ holds true on a randomly sampled instance of $D$. It is known that for specific query classes, it can be efficiently reduced to the problem of determining the number of paths in an NFA. Notably, for self-join-free path queries over binary relations, the reduction is linear in the size of both $Q$ and $D$~\cite{vanBremen2023}. Therefore, a practical algorithm for \#NFA has immediate implications in the context of probabilistic query evaluation. 

\paragraph{Counting Answers to Regular Path Queries}

Regular path queries (RPQs) are a fundamental construct in graph query languages. They allow us to count the paths between nodes that match a given regular expression. This problem can be efficiently reduced to counting paths in an NFA. The resulting NFA is constructed by combining the NFAs representing the database and the regular expression. The reduction is linear in the size of both the query and the database~\cite{ACJR19}.  

\paragraph{Probabilistic Graph Homomorphisms}

A probabilistic graph $(H, \pi)$ is a graph $H$ where each edge $e$ is associated with a probability $\pi(e)$. The probabilistic graph homomorphism problem asks for the probability that a random subgraph of $H$ contains a specific pattern (given by a query graph $G$). For  path queries such as 1-way path queries, this problem can be reduced to  \#NFA~\cite{AvBM23}.

Given the widespread usage of automaton as a modeling tool for computational processes, \#NFA finds applications in  verification~\cite{SMC2010}, 
testing of software~\cite{sutton2007fuzzing}, distributed systems~\cite{Ozkan2019}, 
quantitative information flow~\cite{Bang2016,Gao2019,Saha2023},
information extraction~\cite{ACJR19}, and music generation~\cite{Donze2014}.
Given such applications, it is highly desirable to have a practical algorithm for \#NFA.

From a theoretical perspective, \#NFA is a \#P-complete problem under Turing reduction and it is also shown complete for the class span-L under parsimonious reduction~\cite{AlvarezJ93}. The membership in span-L indicated possibility of polynomial-time randomized approximation, and consequently, finding such an algorithm became a major open problem. The first major advance was achieved by Kannan, Sweedyk, and Mahaney~\cite{KSM95} who proposed the first quasi-polynomial randomized approximation scheme for \#NFA. In a major breakthrough, Arenas, Croquevielle, Jayaram, and Riveros (henceforth referred as ACJR, after initials of the authors) presented the first fully polynomial randomized approximation scheme (FPRAS)~\cite{ACJR19}. As a corollary, ACJR's result also implied that every problem in span-L admits FPRAS.

The discovery of FPRAS did not lend itself to practical implementations as the time complexity of ACJR's scheme is too high for practical implementation.  In particular, the scheme due to ACJR had time complexity of  $\tilde{O}(n^{17}m^{17}\varepsilon^{-14}\log(\delta^{-1}))$, where $m$ is the number of states of the automaton, $n$ is the word length, and $\delta \in (0,1]$ and $\varepsilon > 0$ are parameters of the FPRAS. Meel, Chakraborty, and Mathur (MCM)~\cite{MeelCM24} managed to achieve a significant improvement in runtime by achieving an FPRAS scheme with time complexity $\tilde{O}((n^{10}m^2 + n^6m^3)\varepsilon^{-4}\log^2(\delta^{-1}))$. While the time complexity of MCM's scheme is a significant improvement compared to ACJR's scheme, it is still hopelessly impractical. Therefore, the major challenge remains: Can we design an FPRAS that is amenable to practical implementation?

The primary contribution of this work is to design a new FPRAS with time complexity of  $O(n^2m^3\log(nm)\varepsilon^{-2}\log(\delta^{-1}))$, which achieves significant improvement for all parameters, i.e., $n,m,\varepsilon$ and $\delta$.   It is perhaps worth emphasizing that membership check for NFA has the complexity of $O(nm^2)$, therefore the complexity of our FPRAS is sub-quadratic in the time complexity of membership check.  The contribution is formalized in the following theorem:
\begin{restatable}{theorem}{mainResult}\label{theorem:main_result}
	For $\calA = (\calQ,q_I,q_F,\calT)$ an NFA with $m =|\calQ|$, $n$ a positive integer in unary, $\varepsilon > 0$ and $\delta > 0$, 
	$\approxMCNFA(\calA,n,\varepsilon,\delta)$ runs in time $O(n^2m^3\log(nm)\varepsilon^{-2}\log(\delta^{-1}))$ and returns $\mathsf{est}$ with the guarantee that
	$
	\Pr\left[\mathsf{est}  \in (1 \pm \varepsilon) |\calL_n(\calA)| \right] \geq 1 - \delta.
	$
\end{restatable}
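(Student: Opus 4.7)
The plan is to establish Theorem~\ref{theorem:main_result} by reducing the overall $(\varepsilon,\delta)$-guarantee on $|\calL_n(\calA)|$ to a layered invariant maintained across the $n$ positions of the unrolled NFA. First I would fix the standard layered view: for $i\in\{0,\dots,n\}$, associate to each length-$i$ prefix the set of states reachable after reading it, and decompose $|\calL_n(\calA)|$ as a sum over \emph{histories} (coherent sequences of such configurations) of the number of accepted completions. The algorithm's invariant is that after processing layer $i$ the object $\cache$ together with the $\histories$ tree implicitly represents a \emph{disjoint} partition of the length-$i$ prefix-language of $\calA$ into classes indexed by histories, together with, for each class, (i) a Monte-Carlo estimate of its size and (ii) a sample drawn approximately uniformly from the class. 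The two procedures $\reduce$ and $\union$ maintain disjointness when classes collide under the NFA's transition relation, while $\estimateAndSample$ refreshes both the size estimate and the representative sample.

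Second, I would prove correctness by induction on $i$. Writing $\varepsilon_i$ and $\delta_i$ for per-layer tolerance and confidence budgets, the inductive step shows that, conditional on every estimator at layer $i-1$ being within a multiplicative $(1\pm\varepsilon_{i-1})$-factor of the true class size, every estimator at layer $i$ is within a $(1\pm\varepsilon_i)$-factor with probability $1-\delta_i$. The key analytic input is that $\estimateAndSample$, when fed an approximately uniform sample from a parent class, returns an unbiased low-variance estimate of the child class size via a Bernoulli trial whose success probability is bounded away from $0$; a standard Chernoff/Hoeffding calculation on $O(\varepsilon_i^{-2}\log\delta_i^{-1})$ independent trials gives the per-class guarantee. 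Setting $\varepsilon_i=\Theta(\varepsilon/n)$ and $\delta_i=\Theta(\delta/(nm))$ and union-bounding over the at most $O(nm)$ classes ever created (each corresponds to a subset of $\calQ$ surviving $\reduce$, and at most $m$ subsets coexist per layer) yields the final $(\varepsilon,\delta)$-guarantee.

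Third, the runtime follows by direct accounting. Each of the $n$ layers processes $O(m)$ histories; each $\estimateAndSample$ call draws $O(\varepsilon^{-2}n^2\log\delta^{-1})$ samples (after absorbing the per-layer $\varepsilon_i$), and evaluating whether a sampled prefix extends into a given child class amounts to one $\transition$ step of cost $O(m^2)$. Combined with a $\log(nm)$ overhead from median-of-means boosting and from maintaining the $\histories$ tree, summing over layers and classes yields the stated $O(n^2m^3\log(nm)\varepsilon^{-2}\log\delta^{-1})$ bound.

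The main obstacle, and the place I expect to spend most of the argument, is variance control in $\estimateAndSample$. An accepted word in an NFA may be consistent with exponentially many runs, so naively sampling runs charges a word to multiple classes and destroys unbiasedness. I would address this through a canonicalization step — presumably the $\lcpn$/$\lcsn$ construction hinted at by the paper's macros — that assigns each word in $\calL_i(\calA)$ to a unique history (for instance, the one whose representative run is lexicographically minimal on its longest common prefix with other runs), so that the classes are genuinely disjoint and the estimator unbiased. Showing that this canonicalization commutes with $\reduce$ and $\union$ across layers, and that the resulting estimator has bounded coefficient of variation uniformly in the input NFA, is the technical heart of the proof; once it is in place, the inductive error analysis and the runtime bookkeeping above go through routinely.
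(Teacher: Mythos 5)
Your proposal has a genuine gap at its technical heart: you explicitly invoke ``a standard Chernoff/Hoeffding calculation on $O(\varepsilon_i^{-2}\log\delta_i^{-1})$ \emph{independent} trials,'' but the trials produced by this algorithm are not independent, and that is the entire difficulty the paper must overcome. Because samples for a state $q$ at layer $i$ are built by extending and reusing the very same samples that feed every other state $q'$ in layer $i$ sharing a predecessor, the indicator events $\mathbbm{1}[w\in S^r(q)]$ for distinct $w\in\calL(q)$ are correlated, and the correlation can be severe (two words whose derivation runs share a long prefix are nearly always sampled together or not at all). The canonicalization you propose via derivation runs does secure \emph{disjointness} (each word charged to a unique run, hence unbiasedness of the estimator), and that matches the paper's Definition of derivation run. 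But canonicalization does nothing to break the \emph{dependence} among estimators. The paper handles this by a completely different mechanism: it couples the algorithm's random variables $S^r(q), p(q)$ with a separate random process whose variables $\vY{r}{h}{t}{q}$ can be analyzed conditionally on a fixed history $h$ (Lemma~\ref{lemma:not_so_black_magic}), proves a \emph{pairwise} bound $\Pr[w,w'\in\vY{r}{h}{t}{q}]\le t^2/t^*$ where $t^*=h(\lcpn(\cdot,\cdot))$ (Lemma~\ref{lemma:proba_second_order}), and then combines this with the combinatorial bound $|I(w,q,\ell)|\le|\calL(q)|/|\calL(q_\ell)|$ (Lemma~\ref{lemma:derivation_path}) to show the variance of $|\hat S^r(q)|$ is $O(\mu+\mu^2n)$. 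Because only pairwise control is available, Chebyshev (not Chernoff) is applied within each batch and the result is boosted by the median of $n_t=O(\log(nm))$ batch means. You mention median-of-means as a ``$\log(nm)$ overhead,'' but it is structurally forced by the loss of independence, not a cosmetic boost; a Chernoff argument is simply unavailable here.

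Two smaller but real problems. First, your notion of ``history'' as indexing a disjoint partition of the prefix language is not the paper's notion: in the paper a history for $q$ is a map $\ancestors(q)\to\mathbb{Q}$ recording the values the algorithm assigned to $p(q')$, used purely to condition in the coupling argument; it does not partition words, and nothing in the algorithm tracks subsets of $\calQ$ (a powerset-style view would be exponential in $m$). Second, your runtime accounting does not reproduce the stated bound: with $\varepsilon_i=\Theta(\varepsilon/n)$ you pay $\Theta(n^2\varepsilon^{-2}\log\delta^{-1})$ trials per state, and with $\Theta(nm)$ states at $O(m^2)$ per membership check this yields $O(n^3m^3\varepsilon^{-2}\log\delta^{-1})$, one factor of $n$ too many. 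The paper avoids the per-layer shrinkage of the tolerance entirely: $\kappa=\varepsilon/(1+\varepsilon)$ is fixed, each $p(q)$ is required to lie in $(1\pm\kappa)/|\calL(q)|$, and the correctness of $N(q_F)=1/p(q_F)$ follows directly with no accumulation across layers because $p(q)$ is capped at $\rho(q)=\min_j p(q_j)$ at each step. The factor of $n$ in the batch size $n_s$ comes not from error accumulation but from the $O(\mu^2 n)$ variance term, which in turn comes from summing the pairwise-dependence bound over the $\le n$ states of a derivation run; the $\log\delta^{-1}$ factor appears only in the outer median amplification $n_u=\lceil 8\ln(1/\delta)\rceil$, not inside the per-state sample complexity.
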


Since we are interested in designing a practical FPRAS, we also complement our description of the algorithm with detailed implementation choices, in particular usage of different caching mechanisms that may be more amenable to practical implementation over GPUs that enable scalable matrix operations.  

 Achieving such a significant reduction in the time complexity required a completely different approach in comparison to prior work. Prior work on the design of FPRAS pioneered a breakthrough idea of reusing the information from samples in the previous layers but not to reuse the samples themselves as reuse would introduce dependence. In this work, we give up the desire for independence and instead reuse samples. The reuse of samples necessitates a different analysis and to this end, our analysis follows the new technique proposed by Meel and de Colnet~\cite{MdC24a} in the context of the design of FPRAS for \#nFBDD. Meel and de Colnet showed how dependence among samples can be analyzed by careful analysis of derivation paths and coupling via random process; we adapt such ideas in the context of NFA. It is perhaps worth remarking that simply appealing to the result of~\cite{MdC24a} would lead to a running time complexity of $O(n^5m^6 \varepsilon^{-4} \log \delta^{-1})$\footnote{A careful reader will note the time complexity expressions stated here is better than the expression stated in Theorem~1 of~\cite{MdC24a}. If we were to simply appeal to Theorem 1 where we replace $|B|$ by $O(nm^2)$, the time complexity would be $O(n^{11}m^{12} \varepsilon^{-4} \log \delta^{-1})$. The improved expression is due to the observation that the unrolled automaton is already a {\em 1-complete BDD}}.

 While our algorithm achieves a time complexity that is sub-quadratic in the time complexity for membership check, there is still work to be done in realizing practical implementation. A natural next step for future work would be the design of efficient data structures and algorithmic engineering to achieve a practical implementation of the proposed FPRAS.

\subsection{Technical Overview}

We begin with a discussion of similarities of our scheme with that of prior work and then discuss the technical differences that have allowed us to achieve significant improvement in the runtime.

\begin{algorithm}
	\caption{FPRAS Template for \#NFA}	\label{alg:fprastemplate}
	\KwIn{NFA $\calA = (\calQ,q_I,q_F,\calT)$ with $\calQ$ a set of $m$ states, a transition relation $\calT$, a single initial state $q_I \in \calQ$, and a single final state $q_F \in \calQ$. A number $n \in \mathbb{N}$ (in unary).}

	\BlankLine
	Unroll the automaton $\calA$ into an acyclic graph $\calA^{\mathsf{u}}_n$  by making $n+1$ copies $\{q^\ell\}_{l \in [0,n],q \in \calQ}$ of the states $\calQ$ and adding transitions between immediate layers.
	
	\For{$\ell \in \{0, \dots, n\}$ \text{and} $q \in \calQ$}
	{
		Compute the estimate $N(q^{\ell})$ using prior estimates and samples $\{N(\hat{q}^i), S(\hat{q}^i)\}_{i<\ell, \hat{q} \in \calQ}$\;
		Compute a uniform set of samples $S(q^\ell)$ using $N(q^\ell)$ and the prior estimates and samples $\{N(\hat{q}^i), S(\hat{q}^i)\}_{i<\ell, \hat{q} \in \calQ}$\;
	}
	\Return $N(q_F^n)$
\end{algorithm}

Similar to previous FPRAS schemes for \#NFA, our algorithm exploits the tight connection between counting and sampling, and works in a {\em bottom-up} manner. In this context, all the FPRAS schemes can be unified to fit the template framework (also presented in~\cite{ACJR19,MeelCM24}) presented in Algorithm~\ref{alg:fprastemplate}. For technical convenience, we work with an unrolled automaton, which is an acyclic graph with $n+1$
levels; each level $\ell$ containing the $\ell^\text{th}$ copy $q^{\ell}$
of each state $q$ of $\calA$ (as is standard in the literature, we can assume that $\calA$ has one accepting state without loss of generality). On a high-level, all the schemes work in a {\em bottom-up} manner by keeping track of two quantities:

\begin{enumerate}
\item An  estimate of $|\calL(q^{\ell})|$, where $\calL(q^{\ell})$ is the language of length-$\ell$ words whose run ends in $q$. To retain consistency with prior work, let us call the estimate as $N(q^{\ell})$
\item A set of samples $S(q^{\ell})$ that consist of words belonging to $\calL(q^{\ell})$
\end{enumerate}

Before we discuss the mechanisms for how $N(q^{\ell})$ and $S(q^{\ell})$ are constructed, it is instructive to observe the recursive formulation for $\calL(q^{\ell})$. 
 \begin{align*}
	\mathcal{L}(q^\ell) = \bigg( \bigcup_{(q_0,0,q) \in \calT} \mathcal{L}(q_0^{\ell-1}) \bigg) \cdot \{0\} \quad \cup \quad \bigg( \bigcup_{(q_1,1,q) \in \calT} \mathcal{L}(q_1^{\ell-1}) \bigg) \cdot \{1\}
\end{align*}

The primary difficulty lies in the overlapping of different terms in the union expressions. This is where the previous schemes appealed to Monte Carlo-based classical union of sets estimation techniques, which would ideally operate on samples that were sampled independently and uniformly at random. It is easy to see that if we were to construct $S(q^{\ell})$ by reusing samples from $S(q_0^{\ell-1})$ and $S(q_1^{\ell-1})$, then this would introduce dependence among the sets $S(q^{\ell})$ for different $q$. To avoid that, the core insight in~\cite{ACJR19,ACJR21} was to rely on the \emph{self-reducibility} union property, which states that the following is true: for every word $w$ of length $k$, all words from $\calL(q^{\ell})$ with suffix $w$ can be expressed as $\bigcup_{\hat q \in P} \calL(\hat q^{\ell-k})\cdot\{w\}$ for some set $P \subseteq \calQ$ depending only on $w$, $q$ and $k$. The self-reducibility property allows us to sample words, character by character, by growing suffix. For a suffix $w$ of length $k$, one has to decide whether to extend it (from the left) into $0 w$ or into $1 w$. By self-reducibility, the subsets of $\calL(q^\ell)$ with suffix $0 w$ or $1 w$  can be expressed as $W_0 = \bigcup_{\hat q \in P_0} \calL(\hat q^{\ell-k-1})$ and $W_1 = \bigcup_{\hat q \in P_1} \calL(\hat q^{\ell-k-1})$ for some easily computable $P_0$ and $P_1$. Since $W_0$ and $W_1$ are unions of sets (the $\calL(\hat q^{\ell-k-1})$'s) for which we already have size estimates (the $N(\hat q^{\ell-k-1})$'s) and samples (the $S(\hat q^{\ell-k-1})$'s), the Monte Carlo method is used to compute estimates $N(W_0)$ and $N(W_1)$ on their sizes and we grow $w$ into $b w$ with probability $N(W_b)/(N(W_0)+N(W_1))$ (in first approximation). In this sampling scheme, words in $S(q^\ell)$ come from an empty suffix that is grown $\ell$ times; the previous samples in $S(\hat q^i)$ ($i < \ell$) are used in the computation but are not extended themselves, i.e., it is not the case that $S(q^\ell)$ is simply constructed by extending words from some set $S(\hat q^{\ell-1})$. Paraphrasing~\cite{MeelCM24}, the reuse of samples is avoided as it introduces dependence among sets $S(q^{\ell})$ for different $q$.

{\em The core technical insight of our work is to go against  the conventional wisdom espoused in prior work and reuse the samples! }Accordingly, our sample construction is simply as follows. (Assume an ordering among different states of $\calA$, denoted by $\prec$ )

 \begin{align*}
	S(q^\ell) &= \Bigg( \bigcup_{(q_0,0,q) \in \calT} S(q^{\ell-1}_0) \mathbin{\big\backslash}  \bigcup_{\substack{(q_i,0,q) \in \calT \\ q_i \prec q_0}}  \mathcal{L}(q_i^{\ell-1}) \Bigg) \cdot \{0\} \cup  \Bigg( \bigcup_{(q_1,1,q) \in \calT} \mathcal{S}(q_1^{\ell-1}) \mathbin{\big\backslash}
	\bigcup_{\substack{(q_i,1,q) \in \calT \\ q_i \prec q_1}}  \mathcal{L}(q_i^{\ell-1})
	\Bigg) \cdot \{1\}
\end{align*}

Observe that it is possible for $w \in \calL(q_i^{\ell-1})$ and $w \in \calL(q_j^{\ell-1})$ and $(q_i,0,q) \in \calT$ and $(q_j,0,q) \in \calT$. In such a case, having an ordering over states of $\calA$ (say $q_i \prec q_j$) ensures that $w0 \in S(q^{\ell})$ can only happen if $w \in S(q_i^{\ell-1})$, i.e., if it was the case that $w \notin  S(q_i^{\ell-1})$ but $w \in  S(q_j^{\ell-1})$, then $w0$ would not be in $S(q^{\ell})$.
 Essentially, such an update ensures that while there may be many runs for a word $w$ to reach $q^{\ell}$, there is a unique derivation run (formally defined in Section~\ref{section:derivation_run}) for every $w$ such that every prefix of $w$ must be in the corresponding sample sets along that run for $w \in  S(q^{\ell})$.  

To highlight the benefits of the reuse of samples, let us point out that the prior work requires polynomially many calls to union of sets estimation for generation of even one sample -- as we would have to make two union of sets estimation calls for every $k$ (and another polynomial factor to account for accumulation of the errors). On the other hand, reuse of samples requires no calls to union of sets procedures, as we simply extend samples from sets in the previous layer.

Of course, the reuse of samples comes with a price: the need to handle dependence, which requires a dramatically different analysis in contrast to prior work.  This is where we rely on the recently proposed approach of bounding dependence via analysis of derivation paths and coupling with random processes~\cite{MdC24a}.  
Observe that if $(q_0, 0,q_1) \in \calT$ and $(q_0,0,q_2) \in \calT$, then for every $\ell$, the construction of sets $S(q_1^{\ell})$ and $S(q_2^{\ell})$ would be reusing samples from $S(q_0^{\ell-1})$.  

A natural question is to determine how to bound the dependence? To this end, let us begin with random variables of interest. We are interested in the random variable $\mathbbm{1}[w \in S(q^{\ell})]$ for every $w \in \calL(q^{\ell})$. Ideally, we would like all these random variables to be independent -- this is what previous techniques were able to achieve by avoiding the reuse of the samples. In our case, we should perhaps settle for bounding pairwise dependence. To this end, let us focus on the quantity 	$\sum_{w,w' \in \calL(q^{\ell})}\Pr[w \in S(q^{\ell}) \text{ and } w' \in S(q^{\ell})]$ which upper bounds the variance of our estimator. 
Since, for every word $w$, we have a unique derivation run, the dependence between the events $\mathbbm{1}[w \in S(q^{\ell})]$ and $ \mathbbm{1}[w' \in S(q^{\ell})]$ can be captured by the overlap between runs of $w$ and $w'$. In particular, let $\hat{q}$ be the last common prefix state of the runs of $w$ and $w'$, then we can show that 
\begin{align}\label{eq:varintution}
	\Pr[w \in S(q^{\ell}) \text{ and } w' \in S(q^{\ell})] = \left(\frac{1}{N(q^{\ell})}\right)^2 \cdot N(\hat{q})
\end{align}
First, contrast the above expression with what would have been the case if the events $\mathbbm{1}[w \in S(q^{\ell})]$ and $ \mathbbm{1}[w' \in S(q^{\ell})]$ were pairwise independent, in such a case we would have  	$\Pr[w \in S(q^{\ell}) \text{ and } w' \in S(q^{\ell})] = \left(\frac{1}{N(q^{\ell})}\right)^2 $, therefore, we are off by a factor of $N(\hat{q})$, which can grow almost as much as $|\calL(q^{\ell})|$. The key observation is that the number pairs of $w$ and $w'$ whose accepting runs have $\hat{q}$ as last common prefix state can be bounded by $\frac{|\calL(q^{\ell})|^2}{|\calL(\hat{q})|}$. Consequently, we are able to bound the variance of our estimator and ensure $N(q^{\ell})$ is a good approximation of $|\calL(q^{\ell})|$.

Before concluding, we discuss another crucial technical innovation in our work. Since we reuse samples, a significant factor contributing to our time complexity is the membership check during the construction of samples, as we need to verify whether a word $w \in \calL(q^\ell)$. We observe that the structure of our algorithms allows us to amortize the cost of these membership checks through an incremental caching mechanism.

In a standard implementation, membership verification $w \in \calL(q^\ell)$ requires $O(nm^2)$ time, where $m$ is the number of states in the original automaton. When processing the unrolled automaton, each transition $(q',b,q)$ with $q' \in \calQ^\ell$ and $q \in \calQ^{\ell-1}$ potentially triggers $O(|\calQ^{\ell-1}|)$ membership tests. With a standard per-layer caching approach, each sample for $q'$ incurs $O(m)$ queries costing $O(nm^2)$ each, plus $O(m^2)$ constant-time lookups, yielding $O(nm^3)$ cost per sample for membership verification alone. With the FPRAS requiring $O(n^2m\varepsilon^{-2}\log(\delta^{-1}))$ samples, this would result in a prohibitive $O(n^3m^4\varepsilon^{-2}\log(\delta^{-1}))$ runtime. Our key insight is that by constructing samples for $\calQ^\ell$ directly from samples for $\calQ^{\ell-1}$, we can derive the cache for level $\calQ^\ell$ from the cache for level $\calQ^{\ell-1}$ through efficient matrix operations, reducing membership verification to $O(1)$ per sample and eliminating the $nm$ factor. We present an efficient incremental cache construction technique that ensures membership checks dominate the runtime, yielding our improved $O(n^2m^3\varepsilon^{-2}\log(\delta^{-1}))$ complexity. We further explore matrix-based caching strategies that show particular promise for practical implementations on modern hardware architectures.

To provide high-level intuition, we have omitted many technical details. We mention two of them briefly to prepare the reader for rest of the paper:  First, the computation of $N(q)$ differs from prior work because we do not have access to independent samples. As a result, we must work under the constraint of limited independence. To address this, we employ the median-of-means estimator. Secondly, it should be noted that Equation (\ref{eq:varintution}) is not entirely rigorous, as $N(q)$ itself is a random variable. Therefore, similar to~\cite{MdC24a}, our analysis proceeds by coupling it with a random process, as detailed in Section~\ref{sec:randomprocess}.

\paragraph{Organization:} The remainder of this paper is structured as follows. Section~\ref{section:background} introduces notations and preliminaries. Section~\ref{section:derivation_run} presents the notion of derivation run, crucial for both algorithm design and analysis. Section~\ref{section:algorithm} presents the proposed algorithm. Section~\ref{section:analysis} provides a detailed analysis of the algorithm's performance; due to space constraints, some proofs are included in the Appendix.

\section{Notations and Prelminaries}\label{section:background}

Given two integers $m < n$, $[n]$ denotes the set $\{1,2,\dots,n\}$. For $a$, $b$ and $\varepsilon$ three non-negative real numbers, we use $a  \in (1 \pm \varepsilon)b$ to denote $(1 - \varepsilon)b \leq a \leq  (1 + \varepsilon)b$, similarly, $a  \in \frac{b}{1 \pm \varepsilon}$, or $a \in b(1\pm \varepsilon)^{-1}$, stands for $\frac{b}{1 + \varepsilon}\leq a \leq  \frac{b}{1 - \varepsilon}$ (with $\frac{b}{0}$ defined as $\infty$ when $b\neq 0$, and $\frac{0}{0}$ defined as $0$). 

A word $w$ over an alphabet $\Sigma$ is a sequence $(w_1 \dots w_k)$ of length $|w| = k$ where each $w_i$ belongs to $\Sigma$. The empty word $\lambda$ (of length $0$) is the empty sequence. $\Sigma^*$ denotes the set of all words over $\Sigma$. For $w$ and $w'$ in $\Sigma^*$, we denote by $w\cdot w'$ their concatenation. For $S \subseteq \Sigma^*$, we write $S\cdot w = \{w' \cdot w \mid w' \in S\}$.

\paragraph{FPRAS}

For a problem that, given an input $x$ of size $s$, aims at computing a real number $N(x)$, a fully polynomial-time randomized approximation scheme (FPRAS) is an algorithm that, given $x$, $\varepsilon > 0$, and $0 < \delta < 1$, runs in time polynomial in $s$, $1/\varepsilon$ and $\log(1/\delta)$, and returns $\tilde{N}$ with the guarantee that
$
\Pr\left[ \tilde{N} \in (1 \pm \varepsilon)N(x)\right] \geq 1 - \delta.
$

\paragraph{NFA}

In this paper we work with the alphabet $\Sigma = \{0,1\}$. A non-deterministic finite automaton (NFA) is a tuple $\calA = (\calQ,q_I,q_F,\calT)$ where $\calQ$ is a finite set of states, $q_I \in \calQ$ is the initial state, $\calT \subseteq \calQ \times  \Sigma \times \calQ$ is a transition relation, and $q_F \in \calQ$ is the final state. We assume a total order $\prec$ on the states. A \emph{run} of $w$ on $\calA$ is a sequence $\rho = (q_0,\dots,q_{|w|})$ such that $q_0 = q_I$ and, for every every $i < |w|$, $(q_i,w_{i+1},q_{i+1}) \in \calT$; $\rho$ \emph{ends on}, or \emph{reaches}, $q_{|w|}$. A word $w$ is \emph{accepted by} $q \in \calQ$, when there is a run of $w$ on $\calA$ that ends on $q$. The \emph{language} of $\calL(q)$ is the set of words accepted by $q$ and the language of $\calA$ is $\calL(\calA) = \calL(q_F)$. For $n \in \mathbb{N}$, the $n$-th \emph{slice} of $\calA$'s language, $\calL_n(\calA)$, is the set of words of length $n$ in $\calL(\calA)$. For $q \in \calQ$ and $b \in \Sigma$, the $b\textit{-}$\emph{predecessors} of $q$ are given by a sequence $\pred(q, b) = (q' \mid (q', b, q) \in \calT)$ ordered consistently with $\prec$. We also define $\pred(q) = (q' \mid (q', 0, q) \in \calT \text{ or } (q', 1, q) \in \calT)$ again with states ordered according to $\prec$.

\paragraph{Unrolling}

Given an automaton $\calA = (\calQ,q_F,q_I,\calT)$ and $n \in \mathbb{N}$, one constructs the \emph{unrolled automaton} $\calA^{\mathsf{u}}_n$ whose language is $\calL_n(\calA)$ in time $O(n|\calT|)$. One can check whether $\calL_n(\calA)$ is empty in time $O(n|\calT|)$; we assume $\calL_n(\calA) \neq \emptyset$. Since $n$ will be fixed, we will just write $\calA^{\mathsf{u}} = (\calQ^\mathsf{u},q_I^0,q_F^n,\calT^\mathsf{u})$. For each state $q \in \calQ$ and $0\leq \ell \leq n$, if $q$ is reachable in $\calA$ by \emph{some} word of length $\ell$ then $\calQ^\mathsf{u}$ contains a state $q^\ell$ and $\calT^\mathsf{u}$ ensures that $q^\ell$ is reachable by exactly \emph{all} words of length $\ell$ that reach $q$ in $\calA$. Formally, $q^0_I$ is in $\calQ^\mathsf{u}$ and, for every $0 \leq \ell < n$, if $(q_1,b,q_2)$ is in $\calT$ and $q_1^\ell$ is in $\calQ^\mathsf{u}$, then $q_2^{\ell+1}$ is in $\calQ^\mathsf{u}$ and $(q_1^\ell,b,q_2^{\ell+1})$ is in $\calT^\mathsf{u}$. We write $\calQ^\ell = \{q^\ell \mid q \in \calQ \text{ and } q \text{ is reachable by words of length } \ell\,\}$ and $\calQ^{< \ell} = \calQ^0 \cup \dots \cup \calQ^{\ell-1}$ and similarly for $\calQ^{\leq \ell}$, $\calQ^{> \ell}$ and $\calQ^{\geq \ell}$. Note that $\calQ^0$ contains only $q^0_I$. Graphically, we see unrolled automata as directed acyclic graphs (DAGs) where the states are vertices and the transitions are edges labeled by the transition symbol $b$; see for instance Figure~\ref{figure:nfa}. We will again assume some total order $\prec$ on the states of $\calA^\mathsf{u}$. Since in this paper we work only with unrolled automatons, we drop the superscript from the states' name. For $q \in \calQ^\mathsf{u}$, we denote by $\ancestors(q)$ its ancestors set, defined inductively as $\ancestors(q^0_I) = \emptyset$ and $\ancestors(q) = \pred(q) \cup \bigcup_{q' \in \pred(q)} \ancestors(q')$. For instance the ancestors of $q_5$ in Figure~\ref{figure:nfa} are $q_I$, $q_1$ and $q_2$. Note that $q \not\in \ancestors(q)$.
\section{Derivation runs}\label{section:derivation_run} 

A word can have several accepting runs, seen as paths in the DAG corresponding to $\calA^\mathsf{u}$. For instance, in Figure~\ref{figure:nfa}, there are two possible runs, highlighted in red and blue, for the word $(010)$ that end on $q_{10}$. In this section a $b$-transition from $q$ to $q'$ ($b \in \{0,1\}$) is denoted by \textcolor{black}{$q \xrightarrow{b} q'$}. For a run \textcolor{black}{$q_1 \xrightarrow{b_1} q_2 \xrightarrow{b_2} \dots  \xrightarrow{b_{k-1}} q_k$} in $\calA^\mathsf{u}$ and for $1 \leq \ell \leq k$, the $\ell^\text{th}$ prefix of \textcolor{black}{$q_1 \xrightarrow{b_1} q_2 \xrightarrow{b_2} \dots  \xrightarrow{b_{k-1}} q_k$} is $q_1$ if $\ell = 1$ and \textcolor{black}{$q_1 \xrightarrow{b_1} q_2 \xrightarrow{b_2} \dots  \xrightarrow{b_{\ell-1}} q_\ell$} otherwise. Given a run $R$ that ends on $q$ , we denote by $R \xrightarrow{b} q'$ the run that extends $R$ with the transition $q \xrightarrow{b} q'$. For $w$ a word in $\calL(q)$, we map $(w,q)$ to a unique accepting run, called the \emph{derivation run} of $w$ for $q$.

\begin{figure}[b]
\centering
\begin{tikzpicture}[xscale=2.5,yscale=1.4]

\draw[color=blue,opacity=0.3,line width=2mm] (1,-0.33) -- (2,-0.33) -- (3,0);
\draw[color=red,opacity=0.3,line width=2mm] (1,-0.33) -- (2,+0.33) -- (3,0);

\draw[color=red,opacity=0.3,line width=1.2mm] (0,0.04) -- (1,-0.29);
\draw[color=blue,opacity=0.3,line width=1.2mm] (0,-0.04) -- (1,-0.38);

\node[fill=white,draw,circle,minimum size=0.6cm,inner sep=0] (qi) at (0,0) {$q_I$};

\node[fill=white,draw,circle,minimum size=0.6cm,inner sep=0] (q1) at (1,+1.0)  {$q_1$};
\node[fill=white,draw,circle,minimum size=0.6cm,inner sep=0] (q2) at (1,+0.33) {$q_2$};
\node[fill=white,draw,circle,minimum size=0.6cm,inner sep=0] (q3) at (1,-0.33) {$q_3$};
\node[fill=white,draw,circle,minimum size=0.6cm,inner sep=0] (q4) at (1,-1.0)  {$q_4$};

\node[fill=white,draw,circle,minimum size=0.6cm,inner sep=0] (q5) at (2,+1.0)  {$q_5$};
\node[fill=white,draw,circle,minimum size=0.6cm,inner sep=0] (q6) at (2,+0.33) {$q_6$};
\node[fill=white,draw,circle,minimum size=0.6cm,inner sep=0] (q7) at (2,-0.33) {$q_7$};
\node[fill=white,draw,circle,minimum size=0.6cm,inner sep=0] (q8) at (2,-1.0)  {$q_8$};

\node[fill=white,draw,circle,minimum size=0.6cm,inner sep=0] (q9) at (3,+1) {$q_9$};
\node[fill=white,draw,circle,minimum size=0.6cm,inner sep=0] (q10) at (3,0)  {$q_{10}$};
\node[fill=white,draw,circle,minimum size=0.6cm,inner sep=0] (q11) at (3,-1) {$q_{11}$};

\node[fill=white,double,draw,circle,minimum size=0.6cm,inner sep=0] (qf) at (4,0) {$q_F$};

\draw[-latex] (qi) -- (q1);
\draw[-latex] (qi) -- (q2);
\draw[densely dashed,-latex] (qi) -- (q3);
\draw[densely dashed,-latex] (qi) -- (q4);

\draw[-latex] (q1) -- (q5);
\draw[densely dashed,-latex] (q1) -- (q6);
\draw[densely dashed,-latex] (q2) -- (q5);
\draw[-latex] (q2) -- (q7);
\draw[-latex] (q3) -- (q6);
\draw[-latex] (q3) -- (q7);
\draw[densely dashed,-latex] (q3) -- (q8);
\draw[-latex] (q4) -- (q8);

\draw[densely dashed,-latex] (q5) -- (q9);
\draw[-latex] (q5) -- (q10);
\draw[densely dashed,-latex] (q6) -- (q10);
\draw[densely dashed,-latex] (q7) -- (q10);
\draw[-latex] (q7) -- (q9);
\draw[-latex] (q8) -- (q10);
\draw[-latex] (q8) -- (q11);

\draw[-latex] (q9) --(qf);

\draw[-latex] (q10) to [out=20,in=165] (qf);
\draw[densely dashed,-latex] (q10) to [out=-20,in=-165] (qf);
\draw[densely dashed,-latex]  (q11) -- (qf);

\end{tikzpicture}
\caption{An unrolled NFA of length $4$. Solid lines represent $1$-transitions. Dashed lines represent $0$-transitions.}\label{figure:nfa}
\end{figure}
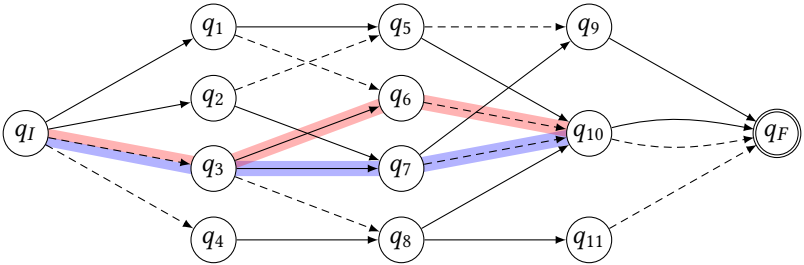

\begin{definition}[Derivation run]
Let $w \in \calL(q)$. The derivation run $\paths(w,q)$ is defined inductively as follows:
\begin{itemize}
\item[•] If $q = q_I$ then $w = \lambda$ and the only derivation run is $\paths(\lambda,q_I) = q_I$.
\item[•] If $q \neq q_I$ then let $b$ be the last symbol of $w$, write $w = w' \cdot b$ and let $q' \in \pred(q,b)$ be the first $b$-predecessor of $q$ such that $w' \in \calL(q')$. Then $\paths(w,q) = \paths(w',q') \xrightarrow{b} q$.
\end{itemize}
\end{definition} 
Going back to Figure~\ref{figure:nfa}, if we assume the state ordering $q_I \prec q_1 \prec q_2 \prec \dots \prec q_{11} \prec q_F$ then $\paths((010),q_{10})$ is the red run $q_I \xrightarrow{0} q_1 \xrightarrow{1} q_6 \xrightarrow{0} q_{10}$.

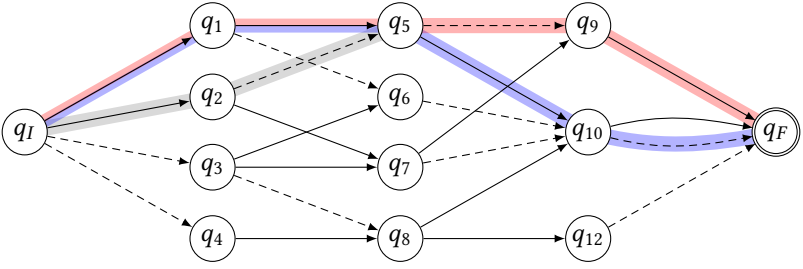
\begin{figure}[h]
\centering
\begin{tikzpicture}[xscale=2.5,yscale=1.4]

\draw[color=blue,opacity=0.3,line width=2mm] (2,1) -- (3,0) to [out=-25,in=-160](4,0);
\draw[color=red,opacity=0.3,line width=2mm] (2,1) -- (3,1) -- (4,0);

\draw[color=gray,opacity=0.3,line width=2mm] (0,0) -- (1,0.33) -- (2,1);

\draw[color=blue,opacity=0.3,line width=1mm] (0,-0.03) -- (1,+0.97) -- (2,+0.97);
\draw[color=red,opacity=0.3,line width=1mm] (0,0.03) -- (1,+1.03) -- (2,+1.03);

\node[fill=white,draw,circle,minimum size=0.6cm,inner sep=0] (qi) at (0,0) {$q_I$};

\node[fill=white,draw,circle,minimum size=0.6cm,inner sep=0]  (q1) at (1,+1.0)  {$q_1$};
\node[fill=white,draw,circle,minimum size=0.6cm,inner sep=0]  (q2) at (1,+0.33) {$q_2$};
\node[fill=white,draw,circle,minimum size=0.6cm,inner sep=0]  (q3) at (1,-0.33) {$q_3$};
\node[fill=white,draw,circle,minimum size=0.6cm,inner sep=0]  (q4) at (1,-1.0)  {$q_4$};

\node[fill=white,draw,circle,minimum size=0.6cm,inner sep=0]  (q5) at (2,+1.0)  {$q_5$};
\node[fill=white,draw,circle,minimum size=0.6cm,inner sep=0]  (q6) at (2,+0.33) {$q_6$};
\node[fill=white,draw,circle,minimum size=0.6cm,inner sep=0]  (q7) at (2,-0.33) {$q_7$};
\node[fill=white,draw,circle,minimum size=0.6cm,inner sep=0]  (q8) at (2,-1.0)  {$q_8$};

\node[fill=white,draw,circle,minimum size=0.6cm,inner sep=0]  (q9) at (3,+1) {$q_9$};
\node[fill=white,draw,circle,minimum size=0.6cm,inner sep=0] (q10) at (3,0)  {$q_{10}$};
\node[fill=white,draw,circle,minimum size=0.6cm,inner sep=0]  (q11) at (3,-1) {$q_{12}$};

\node[fill=white,double,draw,circle,minimum size=0.6cm,inner sep=0] (qf) at (4,0) {$q_F$};

\draw[-latex] (qi) -- (q1);
\draw[-latex] (qi) -- (q2);
\draw[densely dashed,-latex] (qi) -- (q3);
\draw[densely dashed,-latex] (qi) -- (q4);

\draw[-latex] (q1) -- (q5);
\draw[densely dashed,-latex] (q1) -- (q6);
\draw[densely dashed,-latex] (q2) -- (q5);
\draw[-latex] (q2) -- (q7);
\draw[-latex] (q3) -- (q6);
\draw[-latex] (q3) -- (q7);
\draw[densely dashed,-latex] (q3) -- (q8);
\draw[-latex] (q4) -- (q8);

\draw[densely dashed,-latex] (q5) -- (q9);
\draw[-latex] (q5) -- (q10);
\draw[densely dashed,-latex] (q6) -- (q10);
\draw[densely dashed,-latex] (q7) -- (q10);
\draw[-latex] (q7) -- (q9);
\draw[-latex] (q8) -- (q10);
\draw[-latex] (q8) -- (q11);

\draw[-latex] (q9) --(qf);

\draw[-latex] (q10) to [out=20,in=165] (qf);
\draw[densely dashed,-latex] (q10) to [out=-20,in=-165] (qf);
\draw[densely dashed,-latex]  (q11) -- (qf);
\end{tikzpicture}
\caption{The last common prefix state of the blue run and red run is $q_5$.}\label{figure:lcpn}
\end{figure}

In the algorithm presented in Section~\ref{section:algorithm}, sample words in $\calL(q)$ are constructed through their derivation runs. Intuitively, two words are more likely to be sampled together when their derivation runs share a large prefix. Given two runs $R$ and $R'$, we call their \emph{last common prefix state} the deepest state contained in both $R$ and $R'$ such that the two runs are consistent up to that state. 
\begin{definition}[last common prefix state]
Let $R$ and $R'$ be two runs in $\calA^\mathsf{u}$ both starting at $q_I$. The \emph{last common prefix state} of $R$ and $R'$, denoted by $\lcpn(R,R')$, is $k^{\text{th}}$ state of $R$ for the largest possible $k$ such that the $k^{\text{th}}$ prefix of $R$ equals the $k^{\text{th}}$ prefix of $R'$.
\end{definition}

Figure~\ref{figure:lcpn} gives an example where the least common prefix state is $q_5$.
Let $w \in \calL(q)$. We denote by $I(w,q,\ell)$ the set of words $w'$ in $\calL(q)$ whose derivation run $\paths(w',q)$ are shared with $\paths(w,q)$ up to the $\ell^{\text{th}}$ state, and then diverge at the $(\ell+1)^{\text{th}}$ state. In other words, if $q \in \calQ^i$ and $w = (w_1w_2\dots w_i)$ and $\paths(w,q) = q_0 \xrightarrow{w_1} q_1 \xrightarrow{w_2} \dots \xrightarrow{w_i} q_i$ with $q_0 = q_I$ and $q_i = q$ then
\[
I(w,q,\ell) = \{w' \in \calL(q) \mid \lcpn(\paths(w,q),\paths(w',q)) = q_\ell\}. 
\]
The next lemma gives an upper bound on $|I(w,q,\ell)|$ and plays a key role in the analysis, so we provide some intuition using Figure~\ref{figure:lcpn}. Assuming $q_I \prec q_1 \prec q_2 \prec \dots \prec q_{11} \prec q_F$, Figure~\ref{figure:lcpn} shows the derivation run for $w = (1101)$ in red. Two words have the derivation runs diverging from $\paths(w,q_F)$ at $q_5$, namely $w' = (1111)$ and $w'' = (1110)$, see for instance $\paths(w'',q_F)$ in blue. So $I(w,q_F,2) = \{(1111),(1110)\}$. In $\paths(w',q_F)$ and $\paths(w'',q_F)$, replacing the portion of the run up to $q_5$ by any other run reaching $q_5$ creates other words in $\calL(q_F)$. For instance the replacement with the run highlighted in gray in $w'$ and $w''$ yields $(1011)$ and $(1010)$. In fact this construction generates $|\calL(q_5)|\cdot |I(w,q_F,2)|$ distinct words in $\calL(q_F)$, hence $|\calL(q_5)|\cdot |I(w,q_F,2)| \leq |\calL(q_F)|$.

\begin{restatable}{lemma}{derivationPath}\label{lemma:derivation_path}
Let $q \in \calQ^i$, $w \in \calL(q)$, $\paths(w,q) = q_0 \xrightarrow{w_1} \dots \xrightarrow{w_i} q_i$ with $q_i = q$ and $q_0 = q_I$ and let $0 \leq \ell \leq i$, then $|I(w,q,\ell)| \leq \frac{|\calL(q)|}{|\calL(q_\ell)|}$.
\end{restatable}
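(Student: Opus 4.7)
The proof hinges on constructing an explicit injection from $\calL(q_\ell)\times I(w,q,\ell)$ into $\calL(q)$, as is strongly suggested by the informal discussion around Figure~\ref{figure:lcpn}. The point is that every $w' \in I(w,q,\ell)$ necessarily has derivation run $\paths(w',q)$ whose first $\ell+1$ states coincide with those of $\paths(w,q)$, namely $q_0,\dots,q_\ell$. Consequently the first $\ell$ letters of $w'$ must be precisely $w_1,\dots,w_\ell$, and we can factor $w' = (w_1\cdots w_\ell)\cdot s(w')$ where $s(w')$ has length $i-\ell$ and corresponds to the transitions of $\paths(w',q)$ from $q_\ell$ onwards.

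Given this factorization, I would define $\Phi : \calL(q_\ell)\times I(w,q,\ell) \to \calL(q)$ by $\Phi(u,w') = u \cdot s(w')$. To see that $\Phi(u,w')$ indeed belongs to $\calL(q)$, note that $u \in \calL(q_\ell)$ gives an accepting run of $u$ ending at $q_\ell$, and the tail of $\paths(w',q)$ starting at $q_\ell$ provides a consistent sequence of transitions labeled $s(w')$ that reaches $q$. Concatenating these two runs yields an accepting run for $u\cdot s(w')$ ending at $q$.

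Injectivity of $\Phi$ is then a length argument: if $\Phi(u_1,w'_1)=\Phi(u_2,w'_2)$, then since $|u_1|=|u_2|=\ell$ and $|s(w'_1)|=|s(w'_2)|=i-\ell$, the concatenations agree component-wise, so $u_1=u_2$ and $s(w'_1)=s(w'_2)$; combined with the uniform prefix $(w_1\cdots w_\ell)$, the latter forces $w'_1=w'_2$. Hence $|\calL(q_\ell)|\cdot |I(w,q,\ell)| \leq |\calL(q)|$, which rearranges to the claimed bound. Edge cases are automatic: at $\ell=0$ we have $|\calL(q_I)|=1$ and the inequality reduces to $|I(w,q,0)|\leq |\calL(q)|$, while at $\ell=i$ the set $I(w,q,i)$ equals $\{w\}$ and the bound gives $1 \leq |\calL(q)|/|\calL(q)|$.

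The only subtle point, and the place where one must be careful, is the characterization of $I(w,q,\ell)$ used to justify the factorization $w'=(w_1\cdots w_\ell)\cdot s(w')$; one needs the definition of derivation run, together with $\lcpn(\paths(w,q),\paths(w',q))=q_\ell$, to conclude that the two runs agree exactly on the first $\ell+1$ states (not merely that they pass through $q_\ell$ at position $\ell$). Once that is spelled out, the rest of the argument is purely combinatorial. I do not anticipate any difficulty beyond this bookkeeping; no probabilistic or analytic machinery is required for this lemma.
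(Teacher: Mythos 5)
Your proposal is correct and follows essentially the same approach as the paper: both arguments observe that every $w' \in I(w,q,\ell)$ factors as $(w_1\cdots w_\ell)\cdot s(w')$ with distinct suffixes $s(w')$, and that the map $(u,w') \mapsto u\cdot s(w')$ from $\calL(q_\ell)\times I(w,q,\ell)$ into $\calL(q)$ is injective, giving $|\calL(q_\ell)|\cdot|I(w,q,\ell)| \leq |\calL(q)|$. You spell out the justification that the derivation runs must agree on the first $\ell+1$ states (and hence the first $\ell$ transition labels) a bit more carefully than the paper, which dispatches that step with ``by definition,'' but the argument is the same.
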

\begin{proof}
Let $I(w,q,\ell) = \{w^1,w^2,\dots\}$. Let $\omega = (w_1,\dots,w_\ell)$ be the restriction of $w$ to the first $\ell$ symbols. By definition, every $w^i$ is of the form $\omega \cdot \omega^i$ for some $\omega^i \in \{0,1\}^{|w|-i}$. If $\omega'$ is a word in $\calL(q^\ell_w)$, then every $\omega' \cdot \omega^i$ is in the language $\calL(q)$. Since the words $w^i$ differ on the last $|w| - \ell$ symbols, the $\omega^i$'s are pairwise distinct, and therefore $\{\omega' \cdot \omega^i\}_i$ is a set of $|I(w,q,\ell)|$ distinct words in $\calL(q)$. Considering all $|\calL(q^\ell_w)|$ possible choices of words for $\omega'$, we find that $\{\omega' \cdot \omega^i \mid \omega' \in \calL(q^\ell_w),\,i \leq |I(w,q,\ell)|\}$ is a set of $|\calL(q^\ell_w)|\cdot|I(w,q,\ell)|$ distinct words in $\calL(q)$.
\end{proof}

\section{Algorithm}\label{section:algorithm}

Our goal is to compute a good approximation of $|\calL_n(\calA)|$. An $O(n|\calQ|^2)$-time preprocessing is enough to determine whether there exist words of length $n$ accepted by $\calA$, thus we assume that $\calL_n(\calA)$ is not empty. Our FPRAS algorithm computes the unrolled automaton $\calA^{\mathsf{u}} = (\calQ^{\mathsf{u}},q_I,q_F,\calT^{\mathsf{u}})$ for that $n$ and processes its states one by one, with the predecessors of $q$ processed before $q$. For each state $q \in \calQ^\mathsf{u}$ the algorithm computes $p(q)$, which seeks to estimate $|\calL(q)|^{-1}$, and polynomially-many sets $S^1(q),\dots,S^\gamma(q)$, all subsets of $\calL(q)$. We call them sample sets. To make the connection to earlier work easier we also define the variable $N(q)$ as $p(q)^{-1}$, though it is not used in the analysis. At the high level, for every state $q$ with $\pred(q) = (q_1,\dots,q_k)$, the algorithm does the following:
\begin{enumerate}[leftmargin=*]
\item[•] use the samples sets $(S^r(q_i))_{\textcolor{black}{i \in [k],r \in [\gamma]}}$ and the $(p(q_i))_{\textcolor{black}{i \in [k]}}$ to compute the estimate $p(q)$
\item[•] construct the sample sets $(S^r(q))_{\textcolor{black}{r \in [\gamma]}}$ using $p(q)$ and the predecessors' sets $(S^r(q_i))_{\textcolor{black}{i \in [k],r \in [\gamma]}}$ 
\end{enumerate}
The algorithm stops after processing $q_F$ and returns $N(q_F)$. The sample sets $(S^r(q))_{r \in [\gamma]}$ and the values $p(q)$ are determined using randomness and thus are random variables. A key difference between our FPRAS and the FPRAS schemes of~\cite{ACJR19,MeelCM24} is that the samples for $q$ are built directly from the samples for $q_i$. This renders the variables $(S^r(q))_{r \in [\gamma]}$ and $p(q)$ quite dependent of one another. Our FPRAS works towards ensuring that ``$\Pr[w \in S^r(q)] = p(q)$''\footnote{We will explain later that this expression is not quite correct.} holds for every $q \in \calQ^\mathsf{u}$ and $w \in \calL(q)$. Thus, if $p(q)$ is a good estimate of $|\calL(q)|^{-1}$, then $S^r(q)$ should be small in expectation.

\subsection{General Description}

We describe how on an estimator for $|\calL(q)|$ is obtained for every state $q$ of the unrolled automaton. For starter, consider the problem of approximating $|\calL(q)|$ where $\pred(q) = (q_1,\dots,q_k)$ and when a sample set $S(q_i) \subseteq \calL(q_i)$ is available for every $i$ with the guarantee that $\Pr[w \in S(q_i)] = \rho$ holds for every $w \in \calL(q_i)$. We compute $\hat S(q)  = \union(q,S(q_1),\dots,S(q_k))$ as follows: for every $1 \leq i \leq k$, $b \in \{0,1\}$ and $w \in S(q_i)$, the word $w \cdot b$ is added to $\hat S(q)$ if and only if $q_i$ is the first $b$-predecessor of $q$ (with respect to $\prec$) accepting $w$. Simple computations show that $\Pr[w \in \hat S(q)] = \rho$ holds for every $w \in \calL(q)$, and therefore $\rho^{-1}|\hat S(q)|$ is an unbiased estimate of $|\calL(q)|$ (i.e., its expected value equals $|\calL(q)|$). Efficient ways of doing the $\union$ are presented in the next section.

\begin{algorithm}
$S' = \emptyset$\\
\For{$b \in \{0,1\}$}{
	let $J$ be the subset of $\{1,\dots,k\}$ such that $(q_j \mid j \in J) = b\text{-}\pred(q)$	\\
	\For{$j \in J$ \textup{and} $w \in S_j$}{
		\textbf{if} $w \not\in \calL(q_\ell)$ for every $\ell < j$ with $\ell \in J$ \textbf{then} 
			add $w \cdot b$ to $S'$
	}
}
\Return{$S'$}
\caption{$\union(q,S_1,\dots,S_k)$ \hfill (informal)
\\
where $q$ has $k$ predecessors ordered as $q_1 \prec \dots \prec q_k$ and $S_i \subseteq \calL(q_i)$ for all $1 \leq i \leq k$}\label{algorithm:union}
\end{algorithm}

Now suppose that different sampling probabilities for every the sets $S(q_i)$, that is, for every $i$ we only know some number $p(q_i)$ such that $\Pr[w \in S(q_i)] = p(q_i)$ holds for all $w \in \calL(q_i)$. Let $\rho(q) = \min(p(q_1),\dots,p(q_k))$. Then we normalize the probabilities using the $\reduce$ procedure. That is, we compute $\bar S(q,q_i) = \reduce(S(q_i), \rho(q)/p(q_i))$, a copy of $S(q_i)$ where each word is kept with probability $\rho(q)/p(q_i)$. We have that $\Pr[w \in \bar S(q,q_i)] = \rho(q)$ for every $w \in \calL(q_i)$ so, with $\hat S(q) = \union(q,\bar S(q,q_1),\dots,\bar S(q,q_k))$, we have that $\rho(q)^{-1}|\hat S(q)|$ is an unbiased estimate of $|\calL(q)|$.

\begin{algorithm}
$S' \leftarrow \emptyset$\\
\For{$s \in S$}{
	add $s$ to $S'$ with probability $p$\\
}
\Return{$S'$}
\caption{$\reduce(S,p)$ with $p \in [0,1]$}
\end{algorithm}

To increase the probability that the estimate is a close to $|\calL(q)|$, we use the ``median of means'' technique. Given two integers $\ns$ and $\nt$ and $\gamma = \ns\nt$, instead of one sample set $S(q_i)$ we now have several sample sets $S^1(q_i),S^2(q_i),\dots,S^{\gamma}(q_i)$ all verifying $\Pr[w \in S^r(q_i)] = p(q_i)$. We define $\bar S^r(q,q_i)$ similarly to $\bar S(q,q_i)$ and $\hat S^r(q)$ similarly to $\hat S(q)$. Each $\rho(q)^{-1}|\hat S^r(q)|$ is an estimate of $|\calL(q)|$. We partition the $\ns\nt$ estimates into $\nt$ batches of $\ns$ estimates, compute the average estimate over each batch, and take the median of the average estimates. For judicious values of $\ns$ and $\nt$, this gives an estimate more tightly concentrated around $|\calL(q)|$ (though not unbiased).

We now describe the core of our FPRAS: \approxMCNFAcore. For $q = q_I$, we just set $p(q_I)$ to the correct value $1$ (Line~c\ref{line:p_q_init}) and all sample sets $S^r(q_I)$ to $\{\lambda\}$ (Line~c\ref{line:S_q_init}). Then for each state $q$ with predecessors $q_1,\dots,q_k$, we call $\estimateAndSample(q)$. At that point, all $p(q_i)$'s and all the $S^r(q_i)$'s for $r \in [\gamma]$ and $i \in [k]$ have been computed and $\estimateAndSample(q)$ does the estimate computation described above: it reduces the $S^r(q_i)$ to $\bar S^r(q,q_i)$ (Line~e\ref{line:normalize}), then computes the $\hat S^r(q)$ (Line~e\ref{line:union}) and compute the median of means estimate (Lines~e\ref{line:mm_start}--e\ref{line:mm_end}). The inverse of the median of means estimate is stored in $\hat \rho(q)$. Note that $|\calL(q)| \geq |\calL(q_i)|$ holds for every $i \in [k]$ so, if each $p(q_i)$ is a good estimate $|\calL(q_i)|^{-1}$, then it makes sense to force the estimate of $|\calL(q)|^{-1}$ to be smaller than $\rho(q) = \min(p(q_1),\dots,p(q_k))$. So we take $p(q) = \min(\rho(q),\hat \rho(q))$ (Line~e\ref{line:take_min}). 

$\estimateAndSample(q)$ finishes by computing the sets $S^r(q)$ from the sets $\hat S^r(q)$. Since every word $w \in \calL(q)$ is in $\hat S^r(q)$ with probability $\rho(q)$, and since $p(q) \leq \rho(q)$, we compute $S^r(q)$ $\reduce(\hat{S}^r(q),p(q)/\rho(q))$. Hence the sampling probability $\Pr[w \in S^r(q)] = p(q)$.

\SetNlSty{textbf}{e}{}
\begin{algorithm}
\tikzmark{A}
$\rho(q) = \min(p(q_1),\dots,p(q_k))$\\
\For{$1 \leq r \leq \ns \nt$ \textup{and} $1 \leq i \leq k$}{
		$\bar S^r(q,q_i) = \reduce\big(S^r(q_i),\frac{\rho(q)}{p(q_i)}\big)$\label{line:normalize}
}
\tikzmark{B}
\For{$1 \leq r \leq \ns \nt$}{
	$\hat{S}^r(q) = \union\left(q,\bar S^r(q,q_1),\dots,
	\bar S^r(q,q_k)\right)$\label{line:union}
}
\tikzmark{C}
\For{$1 \leq j \leq \nt$}{\label{line:mm_start}
	$M^j(q) = \frac{1}{\ns\rho(q)}\sum_{r = 1}^{\ns}|\hat{S}^{\ns(j-1) + r}(q)|$
}
$\hat\rho(q) = \median(M^1(q),\dots,M^{\nt}(q))^{-1}$\label{line:mm_end} \\
$p(q) = \min(\rho(q),\hat\rho(q))$\label{line:take_min}\\
$N(q) = \frac{1}{p(q)}$\\
\tikzmark{D}
\For{$1 \leq r \leq \ns\nt$}{
 	$S^r(q) = \reduce\big(\hat{S}^r(q),\frac{p(q)}{\rho(q)}\big)$
\tikzmark{E}\begin{tikzpicture}[remember picture, overlay]
\def\x{6};
\def\o{0.45};
\draw[very thick,decorate,decoration={calligraphic brace,amplitude=6pt,raise=4pt}] ($(pic cs:A) + (\x+0.5, \o-0.15)$)   -- ($(pic cs:B) + (\x+0.5,\o)$);
\node[align=left, anchor=west,font=\footnotesize] at ($(pic cs:A) + (\x+1,-0.4)$)   {normalize sampling probabilities \\ (probability that a word is in $\bar S^r(q,q_i)$ equals $\rho(q)$)};
\draw[very thick,decorate,decoration={calligraphic brace,amplitude=6pt,raise=4pt}] ($(pic cs:B) + (\x+0.5, \o-0.15)$) -- ($(pic cs:C) + (\x+0.5,\o)$);
\node[align=left, anchor=west,font=\footnotesize] at ($(pic cs:B) + (\x+1,-0.1)$)   {generate samples for $q$, keeping only those \\ obtained via their derivation run  \\ (probability that a word is in $\hat S^r(q)$ equals $\rho(q)$)};
\draw[very thick,decorate,decoration={calligraphic brace,amplitude=6pt,raise=4pt}] ($(pic cs:C) + (\x+0.5, \o-0.15)$)   -- ($(pic cs:D) + (\x+0.5,\o)$);
\node[align=left, anchor=west,font=\footnotesize] at ($(pic cs:C) + (\x+1,-0.8)$) {compute the estimate for $|\calL(q)|^{-1}$ with a \\ median of means};
\draw[very thick,decorate,decoration={calligraphic brace,amplitude=6pt,raise=4pt}] ($(pic cs:D) + (\x+0.5, \o-0.15)$)   -- ($(pic cs:E) + (\x+0.5-4.55,\o-0.75)$);
\node[align=left, anchor=west,font=\footnotesize] at ($(pic cs:D) + (\x+1,-0.25)$) {reduce the sample sets for $q$ to ensure that the\\
sampling probability is $p(q)$\\ (probability that a word is in $S^r(q)$ equals $p(q))$};
\end{tikzpicture}}
\caption{$\estimateAndSample(q)$ with $\pred(q) = (q_1,\dots,q_k)$}
\end{algorithm}

\SetNlSty{textbf}{c}{}
\begin{algorithm}
$p(q_I) = 1$ \label{line:p_q_init}\\ 
$\computeCache{0}$\\
\For{$1 \leq r \leq \ns\nt$}{
	$S^r(q_I) = \{\lambda\}$\label{line:S_q_init}\\
}
\For{$1 \leq i \leq n$}{
	$\computeCache{i}$\\
	\For{$q \in \calQ^i$}{
		$\estimateAndSample(q)$\\ 
		\textbf{if }$\sum_{r \in [\ns\nt], q \in \calQ^\mathsf{u}} |S^r(q)| \geq \theta$ \textbf{then return } $0$\label{line:interrupt}
	}
	$\updateCache{i}$\\
}
\Return{$N(q_F)$}
\caption{$\approxMCNFAcore(\calA^{\mathsf{u}},n,\ns,\nt,\theta)$}
\end{algorithm}

To ensure a polynomial running time, $\approxMCNFAcore$ terminates and returns $0$ as soon as the number of samples grows too large (Line~c\ref{line:interrupt}). We will show that the probability of terminating this way is small and that, for well-chosen parameter values, $\approxMCNFAcore$ returns a good estimate of $|\calL(\calA^\mathsf{u})|$ with probability at least $3/4$. The full FPRAS $\approxMCNFA$ amplifies this probability to $1 - \delta$ by returning the median output of several independent runs of $\approxMCNFAcore$.

\SetNlSty{textbf}{}{}
\begin{algorithm}
compute the unrolled NFA $\calA^{\mathsf{u}} = (\calQ^{\mathsf{u}},q_I,q_F,\calT^{\mathsf{u}})$ of $\calA$ for $n$ \\
$\kappa = \frac{\varepsilon}{1 + \varepsilon}$, $\ns = \lceil 4(n+1)\frac{(1+\kappa)^2}{\kappa^2(1-\kappa)} \rceil = \lceil 4(n+1)\frac{(1+2\varepsilon)^2(1+\varepsilon)}{\varepsilon^2} \rceil$, $\nt = \lceil 8\ln(16n|\calQ|) \rceil$, $n_u = \lceil 8\ln(1/\delta)\rceil$, $\theta = 16\ns\nt n(1+\kappa)|\calQ|$\\
\For{$1 \leq j \leq n_u$}{
	$\mathsf{est}_j = \approxMCNFAcore(\calA^{\mathsf{u}},n,\ns,\nt,\theta)$\\
}
\Return $\median(\mathsf{est}_1,\dots,\mathsf{est}_{n_u})$ 
\caption{$\approxMCNFA(\calA = (\calQ,q_I,q_F,\calT),n,\varepsilon,\delta)$}
\end{algorithm}

\subsection{Caching}

Algorithm~\ref{algorithm:union} shows what is expected of $\union$ but does not suggest any implementation. Given a word $w$ and a state $q$ in the unrolled automaton $\calA^\mathsf{u} = (\calQ^\mathsf{u},q_I,q_F,\calT^\mathsf{u})$, it takes time linear in $|\calT^\mathsf{u}|$ to determine whether $w \in \calL(q)$. From the DAG perspective, this is a reachability problem. Thus, there is an implementation of $\union(q,S_1,\dots,S_k)$ in time linear in $k\cdot |\calT^{\mathsf{u}}| \cdot \sum_{i} |S_i|$. In this section, we explain how to achieve better performances using caching. For every $0 \leq i \leq n$, the cache takes the form of two matrices $\cache_i$ and $\cache'_i$. Before going through all states of $\calQ^i$, the matrix $\cache'_i$ is computed by $\computeCache{i}$ using $\cache_{i-1}$. Once all states of $\calQ^i$ have been processed, $\cache'_i$ is updated by $\updateCache{i}$ to give $\cache_i$.

We denote by $\calS^i = \bigcup_{r \in [\ns\nt]} \bigcup_{q \in \calQ^i} S^r(q)$ the set of all words sampled for states in $\calQ^i$. We present two caching schemes. In both schemes, $\cache_{i}$ is a $|\calS^i| \times |\calQ^i|$ matrix. Rows are identified by sampled words and columns are identified by states, so we write $\cache_i(w,q)$ with $w \in \calS^i$ and $q \in \calQ^i$ for the entry at row $w$ and column $q$. We say that $\cache_i$ is \emph{correct} when, for all $w \in \calS^i$, $\cache_i(w,q)$ equals $1$ if $w \in \calL(q)$ and $0$ otherwise, formally:
\begin{align*}\label{equation:cache_correctness}
\cache_i(w,q) = \mathbbm{1}(w \in \calL(q))
\end{align*}

\subsubsection{First caching scheme}

Assuming $\cache_{i-1}$ is correct, $\union(q,S_1,\dots,S_k)$ can be implemented in time linear in $k\cdot \sum_{i} |S_i|$ with Algorithm~\ref{algorithm:union_1}.
\begin{algorithm}
$S' = \emptyset$\\
\For{$b \in \{0,1\}$}{
	let $J$ be the subset of $\{1,\dots,k\}$ such that $(q_j \mid j \in J) = b\text{-}\pred(q)$	\\
	\For{$j \in J$ \textup{and} $w \in S_j$}{
		\textbf{if} $\cache_{i-1}(w,q_\ell) = 0$ for all $\ell \in J$ with $\ell < j$ \textbf{then} add $w \cdot b$ to $S'$\\
	}
}
\Return{$S'$}
\caption{$\union(q,S_1,\dots,S_k)$ with $q \in \calQ^i$ \hfill (1st version)}\label{algorithm:union_1}
\end{algorithm}

Now, how to compute a correct cache? For $i = 0$, $\cache_0$ is the unit $1 \times 1$ matrix: $(1)$. It is correct since $\cache_0(\lambda,q_I) = 1$. For $0 < i \leq n$, the intermediate matrix $\cache'_i$ is constructed from $\cache_{i-1}$ with a matrix product. For $b \in \{0,1\}$, let $\transition^b_{i-1,i}$ be the $|\calQ^{i-1}| \times |\calQ^i|$ $b\textit{-}$transition matrix from $\calQ^{i-1}$ and $\calQ^i$. Formally, for every $(q,q') \in \calQ^{i-1} \times \calQ^i$, $\transition^b_{i-1,i}(q,q') = \mathbbm{1}(q' \in b\text{-}\pred(q))$. Then $\cache'_i$ is computed as
\begin{equation}\label{equation:cache_matrix_product}
\cache'_i = \begin{pmatrix}
\cache_{i-1} \times \transition^0_{i-1,i} \smallskip
\\
\cache_{i-1} \times \transition^1_{i-1,i} 
\end{pmatrix}. \tag{$\dagger$}
\end{equation}
This is a $(2\cdot|\calS^{i-1}|)\times|\calQ^i|$ matrix. The first $|\calS^{i-1}|$ rows correspond to the words $w \cdot 0$ for $w \in \calS^{i-1}$ and the last $|\calS^{i-1}|$ rows to the words $w \cdot 1$. We say that $\cache'_i$ is \emph{correct} when $\cache'_i(w,q) = \mathbbm{1}(w \in \calL(q))$ holds for all $w \in (\calS^{i-1} \cdot 0) \cup (\calS^{i-1} \cdot 1)$ and $q \in \calQ^i$. It is not hard to see that if $\cache_{i-1}$ is correct, then so is $\cache'_i$. Since $\calS^i \subseteq (\calS^{i-1} \cdot 0) \cup (\calS^{i-1} \cdot 1)$, constructing $\cache_i$ amounts to keeping those rows of $\cache'_i$ for words in $\calS^i$. This is done in $\updateCache{i}$. It is clear that the correctness of $\cache'_i$ carries to $\cache_i$.

\subsubsection{Second caching scheme}
\begin{algorithm}
	$S' = \emptyset$\\
	\For{$b \in \{0,1\}$}{	
		let $J$ be the subset of $\{1,\dots,k\}$ such that $(q_j \mid j \in J) = b\text{-}\pred(q)$	\\
		\For{$j \in J$ \textup{and} $w \in S_j$}{
			\textbf{if} $\cache'_i(w \cdot b,q) \in [2^{k-j},2^{k-j+1})$ \textbf{then} add $w\cdot b$ to $S'$
		}
	}
	\Return{$S'$}
	\caption{$\union(q,S_1,\dots,S_k)$ with $q \in \calQ^i$ \hfill (2nd version)}\label{algorithm:union_2}
\end{algorithm}
We now describe a second caching strategy that further simplifies the implementation of $\union$. 
Here $\cache'_i$ is still a $(2\cdot|\calS^{i-1}|)\times|\calQ^i|$ matrix but its entries are not limited to $0$ and $1$. $\cache_0$ is again defined as the unit $1 \times 1$ matrix; we assume again that $\cache_{i-1}$ is correct and we again compute $\cache'_i$ using~(\ref{equation:cache_matrix_product}) but with another matrix $\transition^b_{i-1,i}$. Here we define $\transition^b_{i-1,i}$ as follows: for $q \in \calQ^{i-1}$ and $q' \in \calQ^i$ with $b\text{-}\pred(q') = (q_1,\dots,q_k)$, $\transition^b_{i-1,i}(q,q') = 0$ when $q$ is not in $b\text{-}\pred(q')$, and $\transition^b_{i-1,i}(q,q') = 2^{k - j}$ when $q = q_j$.

Consider a word $w \in \calS^{i-1}$. If $\cache_{i-1}$ is correct then $\cache'_i(w \cdot b, q) = 0$ when $w \cdot b \not\in \calL(q)$. But now, we also have that if $w \cdot b \in \calL(q)$ and, if $q_j$ is the first $b\textit{-}$ predecessor of $q$ with $w \in \calL(q_j)$, then $\cache'_i(w \cdot b,q)$ equals $2^{k-j}$ plus a sum of distinct powers of two whose exponents are strictly smaller than $k - j$. So $\cache'_i(w \cdot b,q)$ is between $2^{k-j}$ and $2^{k-j+1} - 1$. This means that we just need to read $\cache'_i(w \cdot b,q)$ and determine the interval $[2^a,2^{a+1})$ in which it lies to determine the index of the first $b\textit{-}$predecessor of $q$ whose language contains $w$. Hence the implementation of $\union$ in Algorithm~\ref{algorithm:union_2}.

Next, to compute $\cache_i$ in $\updateCache{i}$, we replace in $\cache'_i$ every non-zero entry by $1$ and remove all rows for words not in $\calS^i$. Since $\cache'_i(w,q) = 0$ if and only if $w \in \calL(q)$, we have that $\cache_i$ is correct. Compared to the first caching scheme, rather than looking in $\cache_{i-1}$ to find the earliest predecessor, we encode that information in $\cache_i$. This way, we replace loops over the predecessors of $q$ with a interval check. In theory, the interval check hides loops and the worst-case running time remains linear in $k\cdot \sum_{i}|S_i|$, but there may be practical benefits in using this version of $\union$.

\section{Analysis}\label{section:analysis}

Our algorithm is designed so that, for every $w \in \calL(q)$, we have something like ``$\Pr[w \in S^r(q)] = p(q)$''. But this equality makes no sense because $\Pr[w \in S^r(q)]$ is a fixed real value whereas $p(q)$ is a random variable. In addition we manipulate the sets $S^1(q),S^2(q),\dots$ in the median of means estimator as if they were independent, which is not exactly true. We circumvent these issues by using a random process with new variables that behave more nicely than $S^r(q)$ and $p(q)$. The random process simulates the variant of $\approxMCNFAcore$ where there is no bound on the size of the sets $S^r(q)$ (so without Line~\ref{line:interrupt}). We call it $\approxMCNFAcore^*$. The process play all possible runs of $\approxMCNFAcore^*$ at once and simulates $S^r(q)$ by a different variable for each possible run. A coupling argument then allows us to replace $S^r(q)$ by one of these variables assuming enough knowledge on the run of the algorithm up to $q$. This knowledge is encoded in what we call a \emph{history} for $q$.

\subsection{History}

A \emph{history} $h$ for a set of states $Q \subseteq \calQ^\mathsf{u}$ is a mapping
$
h : Q \rightarrow \mathbb{Q}$.
$h$ is \emph{realizable} when there exists a run of $\approxMCNFAcore^*$ that gives the value $h(q)$ to $p(q)$ for every $q \in Q$. Such a run is said \emph{compatible} with $h$. Two histories $h$ for $Q$ and $h'$ for $Q'$ are \emph{compatible} when $h(q) = h'(q)$ for all $q \in Q \cap Q'$. Compatible histories can be merged into an history $h \cup h'$ for $Q \cup Q'$. For $q \in Q$ and $t \in \mathbb{Q}$, we write $h \cup (q \mapsto t)$ to refer to the history $h$ augmented with $h(q) = t$. We only study histories realizable for sets $Q$ that are closed under ancestry, that is, if $q \in Q$ and $q' \in \ancestors(q)$, then $q' \in Q$. Thus we abuse terminology and simply refer to a history for $\ancestors(q)$ as a history for $q$. The only history for $q_I$ is the vaccuous history $h_\emptyset$ for $Q = \emptyset$ (because no ancestors). For $q \in \calQ^\mathsf{u}$, the random variable $H(q)$ is the history for $q$ obtained when running $\approxMCNFAcore^*$.

\subsection{Random Process}\label{sec:randomprocess} 

The random process comprises $n_sn_t$ independent copies identified by the superscript $r$. For $q \in \calQ^\mathsf{u}$, $t \in \mathbb{Q}$ and $h$ a realizable history for $q$, we have several random variables $\vY{r}{h}{t}{q}$ with domain all possible subsets of $\calL(q)$.  $\vY{r}{h}{t}{q}$ simulates $S^r(q)$ in the situation where the value $p(q')$ for each ancestor $q'$ of $q$ has been set to $h(q')$ and where $p(q)$ is set to $t$ (so $t$ is restricted to values that can be given to $p(q)$ by the algorithm under $h$). The variables $\vY{r}{h}{t}{q}$ are defined inductively.
\begin{enumerate}[leftmargin=*]
\item[•] If $q = q_I$ then $\calL(q) = \{\lambda\}$ ($\lambda$ the empty word) and only $\vY{r}{h_\emptyset}{1}{q} = \{\lambda\}$ is defined.
\item[•] If $q \neq q_I$ then let $\pred(q) = (q_1,\dots,q_k)$. For every $\vY{r}{h_1}{t_1}{q_1},\dots,\vY{r}{h_k}{t_k}{q_k}$ such that the histories $h_1,\dots,h_k$ are pairwise compatible, let $h = h_1 \cup \dots \cup h_k \cup (q_1 \mapsto t_1,\dots,q_k \mapsto t_k)$, $t_{\min} = \min(t_1,\dots,t_k)$, and
$
\vZ{r}{h}{q} = \union(q,\reduce(\vY{r}{h_1}{t_1}{q_1},\frac{t_{\min}}{t_1}),
\dots,
\reduce(\vY{r}{h_k}{t_k}{q_k}),\frac{t_{\min}}{t_k})).
$
Now, for all $t \leq t_{\min}$, define
$
\vY{r}{h}{t}{q} = \reduce(\vZ{r}{h}{q},\frac{t}{t_{\min}}).
$
\end{enumerate}
\noindent The variables $\vZ{r}{h}{q}$ and $\vY{r}{h}{t}{q}$ simulate $\hat{S}^r(q)$ and $S^r(q)$ when the algorithm run is compatible with history $h$ and when the value $t$ is computed for $p(q)$. This is expressed formally in Lemma~\ref{lemma:not_so_black_magic}.

\begin{restatable}{lemma}{notSoBlackMagic}\label{lemma:not_so_black_magic}
For every $q \in \calQ^\mathsf{u}$ and $R \subseteq [n_sn_t]$, we have $(H(q),(\hat{S}^r(q))_{r \in R}) = (H(q),(\vZ{r}{H(q)}{q})_{r \in R)})$ and $(H(q),p(q),(S^r(q))_{r \in R}) = (H(q),p(q),(\vY{r}{H(q)}{p(q)}{q})_{r \in R})$.
\end{restatable}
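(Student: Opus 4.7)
The plan is to prove both identities simultaneously by induction on a topological ordering of $\calA^\mathsf{u}$. The random process of Section~\ref{sec:randomprocess} is designed precisely so that the coin flips performed by $\reduce$ and $\union$ inside $\estimateAndSample(q)$ are mirrored by analogous fresh coins in the definitions of $\vZ{r}{h}{q}$ and $\vY{r}{h}{t}{q}$; the role of the history $h$ is only to select which virtual branch of the process's coupled copies is consumed at each node. Once the inductive hypothesis matches the predecessors' samples to the corresponding $\vY$-variables under the appropriate history, the identities at $q$ reduce to observing that identical operations are applied to equal inputs.

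The base case $q = q_I$ is immediate: the algorithm deterministically produces $H(q_I) = h_\emptyset$, $p(q_I) = 1$ and $S^r(q_I) = \{\lambda\} = \vY{r}{h_\emptyset}{1}{q_I}$, while $\hat{S}^r(q_I)$ is not computed. For the inductive step, fix $q \neq q_I$ with $\pred(q) = (q_1,\dots,q_k)$ and condition on $H(q) = h$. This conditioning determines, for each $i \in [k]$, both the value $t_i := h(q_i) = p(q_i)$ and the sub-history $h_i$ seen at $q_i$. Applying the inductive hypothesis \emph{jointly} across the predecessors yields that the conditional distribution of $\bigl((S^r(q_i))_{r \in R,\, i \in [k]}\bigr)$ agrees with that of $\bigl((\vY{r}{h_i}{t_i}{q_i})_{r \in R,\, i \in [k]}\bigr)$. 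The algorithm then builds $\hat{S}^r(q)$ by applying $\reduce(\,\cdot\,, \rho(q)/p(q_i))$ to each $S^r(q_i)$ and $\union$-ing the results using fresh coins; the process's definition of $\vZ{r}{h}{q}$ is the same recipe with analogous fresh coins, so coupling the two coin families yields the first identity. Since $p(q)$ is a deterministic function of $\rho(q)$ and the sizes $(|\hat{S}^r(q)|)_r$, and since $S^r(q) = \reduce(\hat{S}^r(q), p(q)/\rho(q))$ while $\vY{r}{h}{p(q)}{q} = \reduce(\vZ{r}{h}{q}, p(q)/t_{\min})$ with $t_{\min} = \rho(q)$ are once again identical operations on equal inputs with fresh independent coins, the second identity follows.

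The main obstacle will be the joint invocation of the inductive hypothesis across $q_1,\dots,q_k$: these predecessors may share ancestors and hence share substantial randomness in both the algorithm and the process, so a purely marginal IH does not suffice. We must carry a joint form of the statement through the induction — over arbitrary subsets $R$ and over all already-processed states simultaneously — so that the inductive hypothesis captures the full joint distribution of the relevant ancestor samples. Fortunately, the process's recursive definition enforces that the \emph{same} $\vY$-variables at shared ancestors feed into different $\vY{r}{h_i}{t_i}{q_i}$'s (through the pairwise compatibility requirement on the $h_i$'s), so the shared-randomness structure of the process mirrors that of the algorithm by construction; the joint IH is therefore the correct inductive invariant and the passage from marginal to joint is only bookkeeping.
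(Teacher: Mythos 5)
Your proposal is correct in spirit but uses a genuinely different technique than the paper. The paper does not argue by induction-plus-coupling at each layer; instead it performs an explicit sequence of \emph{program transformations} on $\approxMCNFAcore^*$: first add the bookkeeping variable $H(q)$, then introduce ``clone'' variables $\hat S^r_h(q)$ and $S^r_{h,t}(q)$ that mirror the unused branches of the computation, then gradually rewrite $\estimateAndSample$ so that it computes \emph{all} clone variables for all histories $h$ and all candidate values $t$ with fresh coins, and only afterward reads off $S^r(q)=S^r_{H(q),p(q)}(q)$ and $\hat S^r(q)=\hat S^r_{H(q)}(q)$. Each rewriting step manifestly preserves the joint distribution of every algorithm variable (since it only adds unused variables or reorders commuting computations), and in the final program the clone variables $\hat S^r_h(q)$, $S^r_{h,t}(q)$ are \emph{literally} the process variables $\vZ{r}{h}{q}$, $\vY{r}{h}{t}{q}$. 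The lemma then falls out as an equality of random vectors on the final program's sample space, with no induction needed.

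What your approach buys: a more traditional, semantic argument that some readers may find easier to follow, and it makes visible the mechanism (conditioning on $H(q)=h$ decouples the parents and then identical fresh-coin operations are applied). What it costs: you must strengthen the inductive hypothesis to a joint statement across all already-processed states, as you rightly identified, and you must state precisely what that joint invariant is — the natural formulation is a coupling invariant, namely that there exists a single probability space carrying both the algorithm and the process on which $S^r(q')=\vY{r}{H(q')}{p(q')}{q'}$ and $\hat S^r(q')=\vZ{r}{H(q')}{q'}$ hold almost surely for every processed $q'$, with the process's $\vY{r}{h'}{t'}{q'}$ for $(h',t')\ne(H(q'),p(q'))$ computed with independent fresh coins. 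Be careful to make this explicit rather than stating it as a mere distributional match: a distributional inductive hypothesis over individual $q'$ does not compose, and even a distributional hypothesis over tuples does not by itself licence the step where you feed the predecessors' sample sets and fresh coins through $\reduce$/$\union$ and identify the output with $\vZ{r}{H(q)}{q}$; the coupling form is the invariant that propagates cleanly. The paper's program-transformation argument sidesteps this bookkeeping entirely, which is exactly why it is organised that way, but your route is a valid alternative once the invariant is stated as a coupling.

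One more thing to tighten in your write-up: when you say ``$p(q)$ is a deterministic function of $\rho(q)$ and the sizes $(|\hat S^r(q)|)_r$,'' that is correct, and it is the reason the coupling extends to the second identity; but you should emphasise that the coins used for the final $\reduce(\hat S^r(q),p(q)/\rho(q))$ are identified with the fresh coins the process allocates for $\vY{r}{H(q)}{t}{q}$ at $t=p(q)$ (and are otherwise unused), so that the identification $S^r(q)=\vY{r}{H(q)}{p(q)}{q}$ holds pointwise on the coupled space, not merely in distribution. This matches the ``filling unused clone variables'' step in the paper's proof.
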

\noindent  Lemma~\ref{lemma:not_so_black_magic} is proved in appendix. The equalities between distribution say that for every $(A^r \subseteq \calL(q) \mid r \in R)$ we have:
\begin{align*}
&\Pr\left[H(q) = h, \,p(q) = t ,\, \bigwedge\nolimits_{r \in R}S^{r}(q) = A^r\right]
= 
\Pr\left[H(q) = h,\,p(q) = t,\, \bigwedge\nolimits_{r \in R}\vY{r}{h}{t}{q} = A^r\right]
\\
&\Pr\left[H(q) = h \text{ and } \bigwedge\nolimits_{r \in R} \hat{S}^{r}(q) = A^r\right]
= 
\Pr\left[H(q) = h \text{ and } \bigwedge\nolimits_{r \in R} \vZ{r}{h}{q} = A^r\right].
\end{align*}

All variables defined within the random process are built independently from the algorithm, in particular they are independent of the variables $H(q)$, $S^r(q)$ and $\hat S^r(q)$. 

\begin{fact}\label{fact:independence}
Events over $\{\vY{r}{h}{t}{q},\vZ{r}{h}{q}\}_{r,t,h,q}$ are independent of events over $\{H(q),S^r(q),\hat{S}^r(q)\}_{r,q}$.
\end{fact}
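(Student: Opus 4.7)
The plan is to make the underlying probability space explicit and exhibit the two families as deterministic functions of two independent bundles of random bits. Concretely, I would work on a product space $\Omega = \Omega_A \times \Omega_P$, where $\Omega_A$ carries all the coin flips consumed by the algorithm $\approxMCNFAcore^*$ (that is, for every call to $\reduce(S,p)$ made by the algorithm, an independent $\mathrm{Bernoulli}(p)$ coin per element of $S$) and $\Omega_P$ carries a separate, independent collection of coins used to instantiate the $\reduce$ steps in the inductive definition of the $\vY$ and $\vZ$ variables of Section~\ref{sec:randomprocess}. Because the random process of Section~\ref{sec:randomprocess} is defined from scratch outside of any particular algorithm run, it is natural — and indeed required for Lemma~\ref{lemma:not_so_black_magic} to even make sense as a distributional coupling — to put its randomness on the second factor.

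Next I would check the two measurability claims by induction on the layer index $i$ for $q \in \calQ^i$. For the algorithmic side, $p(q_I)$ and $S^r(q_I)$ are deterministic, and for $i > 0$ the quantities $\hat S^r(q)$ and $S^r(q)$ (and hence $p(q)$ and the extended history $H(q)$) are obtained from $\{S^r(q_j), p(q_j), H(q_j)\}_{j}$ through $\reduce$ and $\union$ calls that use only coins from $\Omega_A$. So every variable in the family $\{H(q),S^r(q),\hat S^r(q)\}_{r,q}$ is $\sigma(\omega_A)$-measurable. The process side is analogous: each $\vY{r}{h_\emptyset}{1}{q_I}$ is deterministic, and the inductive construction of $\vZ{r}{h}{q}$ and $\vY{r}{h}{t}{q}$ uses only $\reduce$ calls whose coins live in $\Omega_P$. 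Hence every variable in $\{\vY{r}{h}{t}{q},\vZ{r}{h}{q}\}_{r,t,h,q}$ is $\sigma(\omega_P)$-measurable.

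Finally, I would conclude by the standard fact that random variables measurable with respect to independent sub-$\sigma$-algebras are independent: for any event $A \in \sigma(\{H(q),S^r(q),\hat S^r(q)\}_{r,q})$ and any event $B \in \sigma(\{\vY{r}{h}{t}{q},\vZ{r}{h}{q}\}_{r,t,h,q})$, one has $A \in \sigma(\omega_A)$ and $B \in \sigma(\omega_P)$, and since $\omega_A$ and $\omega_P$ are independent by the product construction, $\Pr[A \cap B] = \Pr[A]\Pr[B]$. This gives exactly the independence stated in Fact~\ref{fact:independence}.

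The main obstacle is not a probabilistic inequality but rather a formal bookkeeping point: the paper introduces the random process informally as ``simulating all possible runs at once'', so one must commit to the convention that its random choices are made from a dedicated pool of coins disjoint from those used inside $\approxMCNFAcore^*$. Once this convention is fixed, the induction on $i$ verifying the two measurability claims is entirely routine, since $\reduce$ and $\union$ are functions of their inputs and of their internal coins only, and no variable in either family reads coins from the other pool.
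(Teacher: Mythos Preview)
Your proposal is correct and is precisely the formalization the paper leaves implicit: the paper does not prove Fact~\ref{fact:independence} at all, justifying it only by the sentence ``All variables defined within the random process are built independently from the algorithm,'' and your product-space construction $\Omega = \Omega_A \times \Omega_P$ with the two inductive measurability checks is the standard way to make that sentence rigorous. One minor remark: your parenthetical that the product construction is ``required for Lemma~\ref{lemma:not_so_black_magic} to even make sense as a distributional coupling'' is slightly off---Lemma~\ref{lemma:not_so_black_magic} is a distributional equality, and the paper actually establishes it in the appendix via a \emph{different} construction (augmenting the algorithm with clone variables so that the equality holds pointwise); the product space is needed so that both Lemma~\ref{lemma:not_so_black_magic} (as a statement about distributions) and Fact~\ref{fact:independence} hold simultaneously on one common space, which is exactly what the analysis in Section~\ref{section:analysis} uses.
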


Whereas the size of $S^r(q)$ and $S^{r'}(q')$ may not be independent when $r \neq r'$ (even when $q = q'$), the $n_sn_t$ copies of the random process do not interact with one another and therefore variables defined within the random process for different $r$ are independent.

\begin{fact}\label{fact:more_independence}
Let $I \subseteq [n_sn_t]$ and consider a variable $\vY{i}{h_i}{t_i}{q_i}$ (resp. $\vZ{i}{h_i}{q_i}$) for every $i \in I$. The variables $\{\vY{i}{h_i}{t_i}{q_i}\}_{i \in I}$ (resp. $\{\vZ{i}{h_i}{q_i}\}_{i \in I}$) are mutually independent.  
\end{fact}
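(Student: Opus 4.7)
The plan is to expose the random process as a family of independent coin-flip spaces, one per copy, and then argue by structural induction that each $\vY{i}{h_i}{t_i}{q_i}$ is a measurable function of the coins of copy $i$ alone; independence of the copies then immediately yields mutual independence of the family. Concretely, I would declare a probability space $\Omega = \prod_{r = 1}^{\ns\nt} \Omega_r$, where $\Omega_r$ holds all the fresh coins consumed by the $r$-th copy of the process in its internal $\reduce$ operations (these are the only sources of randomness in the definitions of $\vZ{r}{h}{q}$ and $\vY{r}{h}{t}{q}$, since $\union$ is deterministic). The $\Omega_r$'s are chosen independent by construction, reflecting the observation made in Section~\ref{sec:randomprocess} that the $\ns\nt$ copies do not interact.

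The key claim to prove is: for every $r$, every realizable history $h$, every admissible $t$ and every $q \in \calQ^\mathsf{u}$, both $\vZ{r}{h}{q}$ and $\vY{r}{h}{t}{q}$ are $\sigma(\Omega_r)$-measurable. I would prove it by structural induction on $q$ following the inductive definition of the random process. The base case $q = q_I$ is immediate, since $\vY{r}{h_\emptyset}{1}{q_I} = \{\lambda\}$ is deterministic. For the inductive step with $\pred(q) = (q_1,\dots,q_k)$, the definition writes $\vZ{r}{h}{q}$ as a $\union$ of the $\reduce(\vY{r}{h_j}{t_j}{q_j},t_{\min}/t_j)$ for $j \in [k]$; by the induction hypothesis each $\vY{r}{h_j}{t_j}{q_j}$ is $\sigma(\Omega_r)$-measurable, the fresh $\reduce$ randomness lives in $\Omega_r$ as well, and $\union$ is deterministic, so measurability propagates to $\vZ{r}{h}{q}$ and then to $\vY{r}{h}{t}{q} = \reduce(\vZ{r}{h}{q},t/t_{\min})$.

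Once this measurability claim is established, the fact follows: for any $I \subseteq [\ns\nt]$ and any choice of parameters, each $\vY{i}{h_i}{t_i}{q_i}$ is a measurable function of the disjoint coordinates $\Omega_i$, which are mutually independent, so the family $\{\vY{i}{h_i}{t_i}{q_i}\}_{i \in I}$ is mutually independent. The same argument works verbatim for $\{\vZ{i}{h_i}{q_i}\}_{i \in I}$.

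The main (and really only) subtlety is careful bookkeeping at the inductive step: within a single copy indexed by $r$, many variables indexed by different triples $(h,t,q)$ may share pieces of $\Omega_r$ (so within one copy there is heavy dependence, which is not contradicted by the statement). The argument must make clear that this sharing never crosses the boundary between different $r$'s, which is exactly what the independence of the $\Omega_r$'s guarantees; beyond that, the proof is a routine structural induction.
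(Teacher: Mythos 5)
Your proof is correct and is precisely a formalization of the paper's one-line justification (given just before Fact~\ref{fact:more_independence}): ``the $n_sn_t$ copies of the random process do not interact with one another and therefore variables defined within the random process for different $r$ are independent.'' The product-space decomposition $\Omega = \prod_r \Omega_r$ together with the structural induction establishing $\sigma(\Omega_r)$-measurability of each $\vY{r}{h}{t}{q}$ and $\vZ{r}{h}{q}$ is exactly the right way to make this rigorous, and it is the intended argument.
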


In the random process we have a proper version of the fallacious equality ``$\Pr[w \in S^r(q)] = p(q)$'', as stated in the following two lemmas (proved in appendix).

\begin{restatable}{lemma}{probaFirstOrder}\label{lemma:proba_first_order}
For every $\vY{r}{h}{t}{q}$ and $w \in \calL(q)$, we have
$\Pr[w \in \vY{r}{h}{t}{q}] = t$.
In addition, if $\pred(q) = (q_1,\dots,q_k)$ then $\Pr[w \in \vZ{r}{h}{q}] = \min(h(q_1),\dots,h(q_k))$. 
\end{restatable}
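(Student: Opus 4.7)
The plan is to prove the two probability identities together by induction on the level of $q$ in the unrolled automaton $\calA^\mathsf{u}$, mirroring the inductive definition of the random process variables. The base case $q = q_I$ is immediate: the only word in $\calL(q_I)$ is $\lambda$, the only variable is $\vY{r}{h_\emptyset}{1}{q_I} = \{\lambda\}$, and no $\vZ{r}{h}{q_I}$ needs to be handled since $q_I$ has no predecessors. So $\Pr[\lambda \in \vY{r}{h_\emptyset}{1}{q_I}] = 1 = t$ trivially.

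For the inductive step, fix $q \neq q_I$ with $\pred(q) = (q_1, \dots, q_k)$ and assume the claim holds for every predecessor. I would first establish the statement about $\vZ{r}{h}{q}$, then derive the statement about $\vY{r}{h}{t}{q}$ as an easy corollary via the $\reduce$ step. Fix $w \in \calL(q)$, write $w = w' \cdot b$ for $b \in \{0,1\}$, and let $q_j$ be the unique first $b$-predecessor of $q$ that accepts $w'$ (this $q_j$ is exactly the penultimate state on the derivation run $\paths(w,q)$). Letting $\bar Y^r_i = \reduce(\vY{r}{h_i}{t_i}{q_i}, t_{\min}/t_i)$ and applying the induction hypothesis together with the fact that $\reduce$ keeps each element independently with probability $t_{\min}/t_i$, we get $\Pr[w' \in \bar Y^r_i] = t_i \cdot (t_{\min}/t_i) = t_{\min}$ whenever $w' \in \calL(q_i)$.

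Next I would invoke the specification of $\union$ (Algorithm~\ref{algorithm:union}): the word $w = w' \cdot b$ enters $\vZ{r}{h}{q}$ iff there exists some index $i$ in the set $J$ of $b$-predecessor indices such that $w' \in \bar Y^r_i$ and $w' \notin \calL(q_\ell)$ for all $\ell \in J$ with $\ell < i$. By our choice of $j$, this condition is equivalent to the single event $w' \in \bar Y^r_j$: no predecessor with smaller index in $J$ accepts $w'$ at all, so the $\bar Y^r_i$ with $i < j$ cannot even contain $w'$, and the $\bar Y^r_i$ with $i > j$ in $J$ are disqualified by the existence of $q_j$. Hence $\Pr[w \in \vZ{r}{h}{q}] = \Pr[w' \in \bar Y^r_j] = t_{\min} = \min(h(q_1), \dots, h(q_k))$, which is the second claim.

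The first claim then follows in one line: since $\vY{r}{h}{t}{q} = \reduce(\vZ{r}{h}{q}, t/t_{\min})$, a further independent thinning gives $\Pr[w \in \vY{r}{h}{t}{q}] = t_{\min} \cdot (t/t_{\min}) = t$. The only real subtlety, and the step I would flag as the one requiring care rather than being a technical obstacle, is the uniqueness argument that collapses the disjunction in the $\union$ definition to the single event $\{w' \in \bar Y^r_j\}$; this is exactly the point at which the derivation-run framework from Section~\ref{section:derivation_run} does its work, and it is what makes the calculation behave as if the samples had been drawn independently with probability $t_{\min}$ or $t$.
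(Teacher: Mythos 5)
Your proof is correct and follows essentially the same approach as the paper's: induction on the level of $q$, reducing the union to the single deterministic derivation-run predecessor $q_j$, and then peeling off the $\reduce$ steps. The only cosmetic difference is that you compute the $\vZ{r}{h}{q}$ probability first and then derive the $\vY{r}{h}{t}{q}$ probability, whereas the paper factors the $\vY$ probability through the $\vZ$ probability before computing the latter; your version also spells out the collapse of the disjunction in $\union$ to the single event $w' \in \bar Y^r_j$ more explicitly than the paper does.
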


\begin{restatable}{lemma}{probaSecondOrder}\label{lemma:proba_second_order}
For every $\vY{r}{h}{t}{q}$ and $w,w' \in \calL(q)$, $w \neq w'$, let $t^* = h(\lcpn(\paths(w,q),\paths(w',q))$. We have
$
\Pr[w \in \vY{r}{h}{t}{q},\, w' \in \vY{r}{h}{t}{q}] \leq t^2/t^*
$ 
and, with $\rho_h(q) = \min_{q' \in \pred(q)}(h(q'))$ we have
$
\Pr[w \in \vZ{r}{h}{q},\,w' \in \vZ{r}{h}{q}] \leq \rho_h(q)^2/t^*.
$
\end{restatable}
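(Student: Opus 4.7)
My plan is to condition on the event that the derivation-run prefix $u$ shared by $w$ and $w'$ is already sampled at $q^{*} = \lcpn(\paths(w,q),\paths(w',q))$. Because $w \neq w'$, $q^{*}$ is a strict ancestor of $q$. Let $u$ denote the common prefix of $w$ and $w'$ whose run from $q_I$ ends at $q^{*}$, let $h^{*}$ be the restriction of $h$ to $\ancestors(q^{*})$, and set $t^{*} = h(q^{*})$, so that $\vY{r}{h^{*}}{t^{*}}{q^{*}}$ is a well-defined variable of the random process. Set $E = \{u \in \vY{r}{h^{*}}{t^{*}}{q^{*}}\}$.

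First, I would show $\{w \in \vY{r}{h}{t}{q}\} \subseteq E$ (and analogously for $w'$): unrolling the recursive definition of $\vY{r}{h}{t}{q}$ along $\paths(w,q)$, each prefix of $w$ at each intermediate state must survive the intermediate $\reduce$ and $\union$ operations, so in particular $u \in \vY{r}{h^{*}}{t^{*}}{q^{*}}$. Combined with Lemma~\ref{lemma:proba_first_order} -- which gives $\Pr[u \in \vY{r}{h^{*}}{t^{*}}{q^{*}}] = t^{*}$ and $\Pr[w \in \vY{r}{h}{t}{q}] = t$ -- this yields
\[
\Pr\big[w \in \vY{r}{h}{t}{q} \mid E\big] \;=\; \frac{t}{t^{*}}
\qquad\text{and}\qquad
\Pr\big[w' \in \vY{r}{h}{t}{q} \mid E\big] \;=\; \frac{t}{t^{*}}.
\]

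The main obstacle is to argue that these two events are conditionally independent given $E$. By definition of $\lcpn$, the derivation runs $\paths(w,q)$ and $\paths(w',q)$ leave $q^{*}$ through distinct successor states. The survival of $w$'s prefix from $q^{*}$ onward depends only on the random coins used in the $\reduce$ calls that build the $\vZ$'s and $\vY$'s attached to the states on $\paths(w,q)$ strictly below $q^{*}$; symmetrically, the survival for $w'$ depends only on the coins for the states on $\paths(w',q)$ strictly below $q^{*}$. Coins attached to different destination states are drawn independently by the random process, and when the two paths later traverse a common destination state (possibly $q$ itself), the relevant $\reduce$-coins are indexed by \emph{distinct elements} of the source sample set (because $w$ and $w'$ are distinct words), hence still independent. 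A short induction on the layer index past $q^{*}$ turns this observation into conditional independence of the two events given $E$. Putting everything together,
\[
\Pr\big[w, w' \in \vY{r}{h}{t}{q}\big] \;=\; \Pr[E]\cdot \frac{t}{t^{*}}\cdot \frac{t}{t^{*}} \;=\; \frac{t^{2}}{t^{*}},
\]
which is the first bound (in fact achieved with equality). The $\vZ$-bound follows identically: Lemma~\ref{lemma:proba_first_order} gives marginal probability $\rho_h(q)$ for $\vZ{r}{h}{q}$, so the same conditioning argument yields $\Pr[w \in \vZ{r}{h}{q} \mid E] = \rho_h(q)/t^{*}$ for both $w$ and $w'$, and the same conditional independence argument gives $\Pr[w, w' \in \vZ{r}{h}{q}] = \rho_h(q)^{2}/t^{*}$.
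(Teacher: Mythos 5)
Your approach differs substantially from the paper's. Where the paper proves a strengthened statement by induction on the layer index $i$ -- allowing two potentially distinct states $q, q'$, two compatible histories $h, h'$, and peeling one layer at a time while tracking the last common prefix state -- you instead condition directly on the event $E = \{u \in \vY{r}{h^*}{t^*}{q^*}\}$ at $q^* = \lcpn(\paths(w,q),\paths(w',q))$ and argue conditional independence of $\{w \in \vY{r}{h}{t}{q}\}$ and $\{w' \in \vY{r}{h}{t}{q}\}$ given $E$. Your containment claim $\{w\in\vY{r}{h}{t}{q}\}\subseteq E$ and the marginal computation $\Pr[w\in\vY{r}{h}{t}{q}\mid E]=t/t^*$ are both sound.

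The gap is in the conditional independence step, and it is a real one. You write that ``when the two paths later traverse a common destination state\dots the relevant $\reduce$-coins are indexed by distinct elements of the source sample set (because $w$ and $w'$ are distinct words), hence still independent.'' This is false at the \emph{first} common destination past $q^*$. By definition of $\lcpn$, the two derivation runs leave $q^*$ via different \emph{transitions}, but they may well enter the \emph{same} successor state $q_a$, namely when $q^*\in\pred(q_a,0)\cap\pred(q_a,1)$ and $q^*$ is simultaneously the first $0$-predecessor and the first $1$-predecessor of $q_a$ accepting the common prefix $u$. In that scenario the construction of $\vZ{r}{\cdot}{q_a}$ calls $\reduce$ \emph{once} on $\vY{r}{h^*}{t^*}{q^*}$ (one call per predecessor state, not per transition), so there is a \emph{single} coin for the word $u$, and whether $u\cdot 0$ and $u\cdot 1$ both survive into $\vZ{r}{\cdot}{q_a}$ is decided by that one shared coin. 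The two events you need to be conditionally independent are therefore positively correlated given $E$, and the product formula $\Pr[w,w'\in\vY{r}{h}{t}{q}\mid E] = (t/t^*)^2$ does not follow. This is precisely the delicate corner the paper's inductive proof has to isolate (the paper's case split on whether $(q_j,t_j,\omega) = (q'_l,t'_l,\omega')$ in the inner-$\reduce$ step is exactly about this coin-sharing), and your conditioning-based route needs a comparable case analysis -- or an explicit argument ruling out shared transitions between adjacent states -- before the independence claim can be made. As written, the claimed conditional independence is not established, so the proof is incomplete.

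One smaller remark: your argument would give an \emph{equality} $t^2/t^*$ whenever the conditional independence did hold, which is stronger than the stated $\leq$; the paper's statement is an inequality precisely because, in the corner case above, the two events may coincide (or be forced together) rather than being independent, and the ``$t^*$'' one must divide by can be larger than what your conditioning picks out.
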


\subsection{Analysis of the FPRAS}
We now conduct the analysis of {\approxMCNFA}. The hardest part to analyze is the core algorithm {\approxMCNFAcore} and to this we will first focus on the analysis of  $\approxMCNFAcore^*$, which, as discussed above, is $\approxMCNFAcore$ without the terminating condition of Line~c\ref{line:interrupt}, so the same algorithm but where $\sum_{r,q} |S^r(q)|$ can grow big.

Recall that $\calA^\mathsf{u} = (\calQ^\mathsf{u},q_I,q_F,\calT^\mathsf{u})$ is an unrolled NFA for words in $\{0,1\}^n$, $\varepsilon > 0$ and $\kappa$, $n_s$, $n_t$ and $\theta$ are defined as in $\approxMCNFA$. The following lemmas provide bound on the correctness of $p(q)$ as well as the size of $S^r(q)$ for $\approxMCNFAcore^*$.  We defer their proofs to later subsections. 

\begin{lemma}\label{lemma:proba_p(q)}
	The probability that $\approxMCNFAcore^*(\calA^\mathsf{u},n,n_s,n_t,\theta)$ computes $p(q) \not\in (1 \pm \kappa)|\calL(q)|^{-1}$ for some $q \in \calQ^\mathsf{u}$ is at most $1/16$.
\end{lemma}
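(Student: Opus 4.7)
The plan is to induct over the states of $\calA^\mathsf{u}$ in topological order, showing that conditional on every proper ancestor of $q$ having a good estimate, $p(q)$ itself is good except with probability $1/(16|\calQ^\mathsf{u}|)$, and then union-bounding. Call a history $h$ for $q$ \emph{good} if $h(q') \in (1\pm\kappa)|\calL(q')|^{-1}$ for every $q' \in \ancestors(q)$, and set $\calE(q) = \{p(q) \in (1\pm\kappa)|\calL(q)|^{-1}\}$. Since $H(q)$ records $p$ at all ancestors, ``$H(q)$ is good'' is exactly $\bigcap_{q' \in \ancestors(q)} \calE(q')$. Fixing a topological order on $\calQ^\mathsf{u}$ and charging each failure to the first topologically-bad state gives the disjoint decomposition
\[
\Pr\bigl[\textstyle\bigcup_{q} \neg \calE(q)\bigr] \leq \sum_{q \in \calQ^\mathsf{u}} \Pr\bigl[\neg\calE(q) \cap \textstyle\bigcap_{q' \in \ancestors(q)}\calE(q')\bigr] \leq \sum_{q} \max_{h \text{ good}} \Pr[\neg\calE(q) \mid H(q) = h].
\]
The base case $q = q_I$ is immediate because $p(q_I) = 1 = |\calL(q_I)|^{-1}$ holds deterministically.

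For $q \neq q_I$ with $\pred(q) = (q_1,\ldots,q_k)$, I condition on $H(q) = h$ for a good $h$. Lemma~\ref{lemma:not_so_black_magic} lets me replace each $\hat S^r(q)$ by $\vZ{r}{h}{q}$, and Fact~\ref{fact:more_independence} makes these variables mutually independent across $r$. Under this conditioning $\rho(q) = \rho_h(q) := \min_i h(q_i)$, and Lemma~\ref{lemma:proba_first_order} gives $\Ex[|\vZ{r}{h}{q}|] = \rho_h(q)|\calL(q)|$, hence $\Ex[M^j(q) \mid h] = |\calL(q)|$. For the variance, I expand $\Ex[|\vZ{r}{h}{q}|^2]$ into diagonal and off-diagonal parts. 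For the off-diagonal, fix $w \in \calL(q)$ with $\paths(w,q) = (q_0,\ldots,q_i)$ and partition the $w' \neq w$ by the index $\ell < i$ such that $\lcpn(\paths(w,q),\paths(w',q)) = q_\ell$, i.e.\ by $w' \in I(w,q,\ell)$. Lemma~\ref{lemma:proba_second_order} bounds each pair's joint probability by $\rho_h(q)^2/h(q_\ell)$, Lemma~\ref{lemma:derivation_path} bounds $|I(w,q,\ell)| \leq |\calL(q)|/|\calL(q_\ell)|$, and the goodness of $h$ gives $h(q_\ell) \geq (1-\kappa)/|\calL(q_\ell)|$. These factors multiply to $\rho_h(q)^2 |\calL(q)|/(1-\kappa)$ per $\ell$. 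Summing over the $\leq n$ values of $\ell$ and over $w \in \calL(q)$ produces
\[
\Va[|\vZ{r}{h}{q}|] \leq \frac{n+1}{1-\kappa}\,\rho_h(q)^2 |\calL(q)|^2, \qquad \Va[M^j(q)\mid h] \leq \frac{(n+1)\,|\calL(q)|^2}{n_s(1-\kappa)}.
\]

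Chebyshev with tolerance $\kappa' = \kappa/(1+\kappa)$ yields $\Pr[M^j(q) \notin (1\pm\kappa')|\calL(q)| \mid h] \leq (n+1)(1+\kappa)^2/(n_s\kappa^2(1-\kappa)) \leq 1/4$ exactly by the choice of $n_s$. Since the $n_t$ batch means are mutually independent (Fact~\ref{fact:more_independence}), a Hoeffding/Chernoff bound on the number of bad batches drives the probability that $\median(M^1(q),\ldots,M^{n_t}(q)) \notin (1\pm\kappa')|\calL(q)|$ below $1/(16|\calQ^\mathsf{u}|)$ by the choice of $n_t$. The asymmetric $\kappa' = \kappa/(1+\kappa)$ is tuned so that on this event $\hat\rho(q) = \median^{-1} \in [(1-\kappa)|\calL(q)|^{-1},(1+\kappa)|\calL(q)|^{-1}]$. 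Moreover $\rho(q) \geq (1-\kappa)|\calL(q)|^{-1}$ because $\calL(q') \cdot \{b\} \subseteq \calL(q)$ for every predecessor, which gives $|\calL(q')| \leq |\calL(q)|$, combined with $h(q') \geq (1-\kappa)/|\calL(q')|$. Hence $p(q) = \min(\rho(q),\hat\rho(q)) \in (1\pm\kappa)|\calL(q)|^{-1}$, and the union bound closes.

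The main obstacle will be the variance estimate. A naive second-moment bound blows up because of the pairwise dependence illustrated by~(\ref{eq:varintution}); the critical observation is that the blowup factor is exactly the quantity $|\calL(\hat q)|$ that Lemma~\ref{lemma:derivation_path} cancels against the number of pairs with $\lcpn$ equal to $\hat q$. Getting this cancellation to yield a clean factor of $(n+1)/(1-\kappa)$, despite (i) the coupling to the random process forcing the use of $\vZ{r}{h}{q}$ rather than $\hat S^r(q)$, (ii) the approximate (not exact) values supplied by a good $h$, and (iii) the $\min$ with $\rho(q)$ in the definition of $p(q)$, is the delicate core of the argument; the Chebyshev / median-of-means / union-bound scaffolding around it is routine once the variance bound is in hand.
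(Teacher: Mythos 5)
Your proof is correct and follows essentially the same approach as the paper: decompose by the first state with a bad estimate, condition on a good history $h$, couple $\hat S^r(q)$ to the independent random-process variables $\vZ{r}{h}{q}$, bound the second moment via Lemmas~\ref{lemma:proba_second_order} and~\ref{lemma:derivation_path}, then Chebyshev and a Hoeffding bound on the median of batch means, handling the $\min$ with $\rho(q)$ at the end. The only cosmetic difference is that you absorb the diagonal term $\mu$ into the variance bound as $\frac{(n+1)\mu^2}{1-\kappa}$ (using $\mu = \rho_h(q)|\calL(q)| \geq 1-\kappa$), whereas the paper carries $\mu + \frac{n\mu^2}{1-\kappa}$ separately; both yield the same Chebyshev estimate.
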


\begin{restatable}{lemma}{probaSq}\label{lemma:proba_S(q)}
	The probability that $\approxMCNFAcore^*(\calA^\mathsf{u},n,n_s,n_t,\theta)$ constructs sets $S^r(q)$ such that $\sum_{r \in [n_sn_t],q \in \calQ^\mathsf{u}} |S^r(q)| \geq \theta$ is at most $1/8$.
\end{restatable}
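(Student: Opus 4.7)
The plan is to bound $\Ex\bigl[\sum_{r,q}|S^r(q)|\bigr]$ using the coupling with the random process, then finish with Markov's inequality combined with the tight control on $p(q)$ given by Lemma~\ref{lemma:proba_p(q)}.

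The first step extracts a clean first-moment identity. By Lemma~\ref{lemma:not_so_black_magic}, $(H(q), p(q), S^r(q))$ has the same distribution as $(H(q), p(q), \vY{r}{H(q)}{p(q)}{q})$, and by Lemma~\ref{lemma:proba_first_order} we have $\Pr[w \in \vY{r}{h}{t}{q}] = t$ for every $w \in \calL(q)$. Combining by linearity gives
\[
\Ex\bigl[|S^r(q)| \mid H(q), p(q)\bigr] \;=\; p(q)\cdot|\calL(q)|.
\]
Let $E$ be the event that $p(q) \in (1\pm\kappa)|\calL(q)|^{-1}$ for every $q \in \calQ^\mathsf{u}$; by Lemma~\ref{lemma:proba_p(q)}, $\Pr[E^c] \leq 1/16$. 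On $E$ we have $p(q)\cdot|\calL(q)| \leq 1+\kappa$, so $\Ex[|S^r(q)|\mathbbm{1}_E] \leq 1+\kappa$. Summing over $r \in [\ns\nt]$ and the non-initial states in $\calQ^\mathsf{u}$ (using $|\calQ^\mathsf{u}|-1 \leq n|\calQ|$ and treating the deterministic contribution $|S^r(q_I)| = 1$ separately):
\[
\Ex\Bigl[\sum_{r,q}|S^r(q)|\mathbbm{1}_E\Bigr]
\;\leq\;
\ns\nt\bigl(1 + n|\calQ|(1+\kappa)\bigr).
\]

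Plugging in the definition $\theta = 16\,\ns\nt\,n(1+\kappa)|\calQ|$, the right-hand side is at most $\theta/16$ up to the lower-order $q_I$ term; the factor $16$ in $\theta$ is chosen precisely to absorb this ratio. Markov applied to the nonnegative random variable $\sum_{r,q}|S^r(q)|\mathbbm{1}_E$ then yields
\[
\Pr\Bigl[\sum_{r,q}|S^r(q)|\mathbbm{1}_E \geq \theta\Bigr] \;\leq\; \frac{1}{16},
\]
and since $\{\sum_{r,q}|S^r(q)| \geq \theta\} \subseteq \{\sum_{r,q}|S^r(q)|\mathbbm{1}_E \geq \theta\}\cup E^c$, a union bound gives the desired $\Pr\bigl[\sum_{r,q}|S^r(q)| \geq \theta\bigr] \leq 1/16 + 1/16 = 1/8$.

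The main obstacle is the very first step: the informal statement ``$\Pr[w \in S^r(q)] = p(q)$'' is not literally meaningful because $p(q)$ is itself random, so the key identity $\Ex[|S^r(q)| \mid p(q)] = p(q)\cdot|\calL(q)|$ must be extracted via the coupling of Section~\ref{sec:randomprocess} with the random process variables $\vY{r}{h}{t}{q}$. Once this identity is available, the remainder is straightforward accounting, modulo the careful choice of constants in $\theta$ ensuring the Markov and $\Pr[E^c]$ contributions each fit into the allotted $1/16$ budget.
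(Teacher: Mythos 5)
Your proposal is correct and follows essentially the same route as the paper's proof. Both decompose the failure event into ($\exists q : p(q)\notin\Delta(q)$), controlled by Lemma~\ref{lemma:proba_p(q)}, plus a Markov bound on $\sum_{r,q}|S^r(q)|$ restricted to the complementary event; and both extract the key first-moment identity from the coupling (Lemma~\ref{lemma:not_so_black_magic}), the independence fact, and Lemma~\ref{lemma:proba_first_order}. The only cosmetic difference is that you condition on $(H(q),p(q))$ and state the conditional expectation $\Ex[|S^r(q)|\mid H(q),p(q)]=p(q)|\calL(q)|$ directly, whereas the paper unwinds the same computation by summing explicitly over histories $h\in\calH_q$ and values $t\in\Delta(q)$; and the paper replaces your global event $E$ by the per-state relaxation $\mathbbm{1}(p(q)\in\Delta(q))\geq\mathbbm{1}_E$ explicitly rather than implicitly. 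One small point worth making explicit in a final write-up: since $E$ involves $p(q')$ for descendants $q'$ of $q$, you should invoke $\mathbbm{1}_E\leq\mathbbm{1}(p(q)\in\Delta(q))$ before applying the tower property, so that the indicator you keep is $(H(q),p(q))$-measurable.
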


Combining the above lemmas, we can now state the correctness and runtime of {\approxMCNFA}.

\begin{restatable}{lemma}{mainResultCore}\label{lemma:main_result_core}
	Let $\calA^\mathsf{u} = (\calQ^\mathsf{u},q_I,q_F,\calT^\mathsf{u})$ be an unrolled NFA recognizing words in $\{0,1\}^n$. Let $m = \max_i |\calQ^i|$, $\varepsilon > 0$, and $\kappa$, $n_s$, $n_t$ and $\theta$ be defined as in $\approxMCNFA$, then $\approxMCNFAcore(\calA^\mathsf{u},n,n_s,n_t,\theta)$ runs in time $O(n^2m^3\log(nm)\varepsilon^{-2})$ and returns $\mathsf{est}$ with the guarantee
$
		\Pr\left[\mathsf{est}  \notin (1 \pm \varepsilon) |\calL(\calA^\mathsf{u})| \right] \leq \frac{1}{4}.
$
\end{restatable}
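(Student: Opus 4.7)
The proof splits cleanly into correctness and runtime, and its whole strategy is to reduce the analysis of $\approxMCNFAcore$ to the starred variant $\approxMCNFAcore^*$ for which Lemmas~\ref{lemma:proba_p(q)} and~\ref{lemma:proba_S(q)} are stated. The plan is to observe that $\approxMCNFAcore$ and $\approxMCNFAcore^*$ produce identical executions \emph{unless} the total sample size crosses $\theta$, in which case $\approxMCNFAcore$ returns $0$ at Line~c\ref{line:interrupt}. By Lemma~\ref{lemma:proba_S(q)}, this premature termination has probability at most $1/8$; conditionally on it not occurring, the output equals $N(q_F) = 1/p(q_F)$ as computed by the starred algorithm, and Lemma~\ref{lemma:proba_p(q)} then guarantees $p(q_F) \in (1 \pm \kappa)|\calL(q_F)|^{-1}$ with probability at least $1 - 1/16$.

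Second, I would translate the $(1\pm\kappa)$ bound on $p(q_F)$ into the desired $(1\pm\varepsilon)$ bound on $N(q_F)$. With the choice $\kappa = \varepsilon/(1+\varepsilon)$ from $\approxMCNFA$, one has $1/(1-\kappa) = 1+\varepsilon$ and $1/(1+\kappa) = (1+\varepsilon)/(1+2\varepsilon) \geq 1-\varepsilon$, so $p(q_F) \in (1\pm\kappa)|\calL(q_F)|^{-1}$ implies $N(q_F) \in (1\pm\varepsilon)|\calL(q_F)|$. A union bound over the two bad events yields failure probability at most $1/16 + 1/8 = 3/16 < 1/4$, matching the statement.

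For the runtime, the termination rule caps the total sample mass processed by $\approxMCNFAcore$ at $\theta + O(1)$ extra elements contributed during the round in which the threshold is crossed, hence $O(\theta)$. The cost of each $\reduce$, $\union$, and final sample-size test is linear in the sizes of the sample sets involved, contributing $O(\theta\cdot m)$ in total (the $m$ factor from the predecessor loop inside $\union$). The dominant cost comes from $\computeCache{i}$, which performs two matrix products of a $|\calS^{i-1}|\times|\calQ^{i-1}|$ matrix by an $|\calQ^{i-1}|\times|\calQ^i|$ matrix, costing $O(|\calS^{i-1}|\,m^2)$ per layer and $O(\theta\,m^2)$ in total. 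Substituting $\theta = O(n_s n_t\,n\,m)$ with $n_s = O(n/\varepsilon^2)$ and $n_t = O(\log(nm))$ gives exactly $O(n^2m^3\log(nm)\varepsilon^{-2})$.

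The main obstacle I expect is the tight accounting of cache cost: one must argue that $|\calS^{i-1}|$ (the number of rows of $\cache_{i-1}$) is really bounded by the running total of $|S^r(q)|$'s, which is itself controlled by $\theta$ throughout the execution, and that abortion at Line~c\ref{line:interrupt} occurs before any cache operation can inflate beyond that budget. The rest is bookkeeping: verifying that $\updateCache{i}$ is a trivial row-deletion, that the median-of-means step is $O(n_s n_t)$ per state and thus subdominant, and that the algorithmic differences between the two caching schemes do not alter the asymptotic bound.
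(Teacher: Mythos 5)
Your proposal follows the same approach as the paper: decompose the failure probability of $\approxMCNFAcore$ into (i) the event that $\approxMCNFAcore^*$ would exceed $\theta$ and (ii) the event that $p(q_F)$ is a bad $\kappa$-estimate, apply Lemmas~\ref{lemma:proba_S(q)} and~\ref{lemma:proba_p(q)} via a union bound, and translate the $(1\pm\kappa)^{-1}$ bound on $p(q_F)$ into a $(1\pm\varepsilon)$ bound on $N(q_F)$ using $\kappa = \varepsilon/(1+\varepsilon)$ — all matching the paper, and your arithmetic for that translation is correct.

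One small accounting error in the runtime: you estimate the cumulated cost of all $\union$ calls as $O(\theta m)$, attributing the $m$ factor to the predecessor loop, but the true bound is $O(\theta m^2)$. There are two separate $m$ factors: each sample $w \in S^r(q')$ with $q' \in \calQ^{i-1}$ can be examined in $\union$ calls for up to $|\calQ^i|\leq m$ distinct successor states $q$, and for each such $(w,q)$ pair the check ``$w \notin \calL(q_\ell)$ for every earlier $b$-predecessor $q_\ell$'' costs $O(m)$ cache lookups. The paper makes exactly this two-factor observation. Your error is harmless, however, because you correctly place the $\computeCache{i}$ cost at $O(\theta m^2)$, and $\union$ is therefore tied with it rather than dominated by it; the final $O(n^2m^3\log(nm)\varepsilon^{-2})$ bound is unchanged.
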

\begin{proof}
	\textcolor{black}{Let $\texttt{A}^{(\ast)} = \approxMCNFAcore^{(\ast)}(\calA^\mathsf{u},n,n_s,n_t,\theta)$.
		\[
		\begin{aligned}
			\Pr_{\texttt{A}}\left[\mathsf{est} \not\in (1 \pm \varepsilon)|\calL_n(\calA)|\right] 
			= \Pr_{\texttt{A}^*}\left[
			\sum_{r,q} |S^r(q)| \geq \theta
			\right] + \Pr_{\texttt{A}^*}\left[
			\sum_{r,q} |S^r(q)| < \theta \text{ and } \mathsf{est} \not\in (1 \pm \varepsilon)|\calL_n(\calA)|
			\right] 
			\\
			\leq \Pr_{\texttt{A}^*}\left[
			\sum_{r,q} |S^r(q)| \geq \theta
			\right] + \Pr_{\texttt{A}^*}\left[
			\mathsf{est} \not\in (1 \pm \varepsilon)|\calL_n(\calA)|
			\right]  
			\leq \Pr_{\texttt{A}^*}\left[
			\sum_{r,q} |S^r(q)| \geq \theta
			\right] + \Pr_{\texttt{A}^*}\left[
			\bigcup_{q}
			p(q) \not\in \frac{(1 \pm \varepsilon)^{-1}}{|\calL(q)|} 
			\right]
		\end{aligned}
		\]
		where $q$ ranges over $\calQ^\mathsf{u}$ and $r$ ranges over $[n_sn_t]$. The number $\kappa$ has been set so that $p(q) \not\in \frac{1 \pm \kappa}{|\calL(q)|}$ implies $p(q) \not\in \frac{(1 \pm \varepsilon)^{-1}}{|\calL(q)|}$ so, using Lemmas~\ref{lemma:proba_p(q)} and~\ref{lemma:proba_S(q)}: 
		\[
		\Pr_{\texttt{A}}\left[\mathsf{est} \not\in (1 \pm \varepsilon)|\calL_n(\calA)|\right] 
		\leq \Pr_{\texttt{A}^*}\left[
		\sum_{r,q} |S^r(q)| \geq \theta
		\right] + \Pr_{\texttt{A}^*}\left[
		\bigcup_{q}
		p(q) \not\in \frac{1 \pm \kappa}{|\calL(q)|} 
		\right] \leq \frac{1}{4} 
		\]
		To analyse the running time, we assume one of the caching strategy is used.} 
	\begin{enumerate}[leftmargin=*]
		\item[•] \textcolor{black}{The cost of $\computeCache{i}$ is twice the cost of multiplying a $|\calS^{i-1}| \times |\calQ^{i-1}|$ matrix with a $|\calQ^{i-1}|\times|\calQ^i|$ matrix, so it is in $O(\max(\frac{|\calS^{i-1}|}{m},1)\cdot m^\omega) = O(\max(|\calS^{i-1}|\cdot  m^{\omega-1}, m^{\omega}))$ where $m = \max_{i} |\calQ^i|$ and $\omega \leq 3$ is the matrix product constant over $\mathbb{N}$. The cost of $\updateCache{i}$ is $O(|\calS^i|\cdot |\calQ^i|)$. So the total cost for caching when the number of samples stays below $\theta$ is at most $O(n\cdot m^{\omega} + \theta\cdot  m^{\omega-1}) = O(\theta\cdot  m^{\omega-1})$ (because $\theta \geq n|\calQ| \geq nm$). We have that $\omega \leq 3$ and $\theta = O(n_sn_t|\calQ^\mathsf{u}|) = O(n_sn_tnm)$ so the cost is in $O(n_sn_tn m^3)$.}
		
		\item[•] \textcolor{black}{For the $\reduce$, each sample $w \in S^r(q)$ with $q \in \calQ^i$ is tested only in $\reduce$ operations triggered by $\estimateAndSample$ when called on states in $\calQ^{i+1}$. So each $w$ can be tested in at most $m$ $\reduce$ operations and therefore the cumulated cost of all $\reduce$ is in $O(\theta  m) = O(n_sn_tn m^2)$.}
		
		\item[•] \textcolor{black}{For the median of means, at most $|\calQ^\mathsf{u}| n_t$ means of $n_s$ integers are computed and at most $|\calQ^\mathsf{u}|$ medians of $n_t$ integers are computed so the cumulated cost of all median of means computations is $O((n_sn_t + n_t\log(n_t))|\calQ^\mathsf{u}|) \leq O(nm(n_sn_t + n_t\log(n_t)))$. $n_t$ is in $O(\log|\calQ^\mathsf{u}|) = O(\log(nm))$ so this sums up to a cost in $O(n_sn_tnm + n_tnm\log\log(nm))$}
		
		\item[•] \textcolor{black}{For the $\union$, with caching it takes $O(m)$-time to determine for $\pred(q) = (q_1,\dots,q_k)$ and $w \in \calL(q_\ell)$ whether $w \in  \calL(q_j)$ for any $j < \ell$. For every a single sample $w$, this query is done only for states $q$ belonging to a same layer $\calQ^i$. So, when the number of samples stays below $\theta$, at most $\theta\cdot m$ queries occur, resulting in a cost in $O(\theta  m^2) = O(n_sn_tn m^3)$.}
	\end{enumerate}
	\textcolor{black}{Thus the overall running time is in $O(n_sn_tnm^3) = O(n^2m^3\varepsilon^{-2}\log(nm))$.}
\end{proof}

\noindent The value $1/4$ is decreased down to any $\delta > 0$ with the median technique, hence our main result.

\mainResult*
\begin{proof}
\textcolor{black}{Let $\mathsf{est}_1,\dots,\mathsf{est}_m$ be the outputs of $n_u$ independent calls to $\approxMCNFAcore$. Let $X_i$ be the indicator variable that takes value $1$ if and only if $\mathsf{est}_i \not\in (1\pm \varepsilon)|\calL_n(\calA)|$, and define $\bar X = X_1 + \dots + X_{n_u}$. By Lemma~\ref{lemma:main_result_core}, $\Ex[\bar X] \leq n_u/4$. Hoeffding bound gives
\[
\Pr\Big[\median(X_1,\dots,X_{n_u}) \not\in (1 \pm \varepsilon)|\calL_n(\calA)|\Big] = \Pr\left[\bar X > \frac{n_u}{2}\right] \leq \Pr\left[\bar X - \Ex[\bar X] > \frac{n_u}{4}\right] \leq e^{-n_u/8} \leq \delta.
\]
The running time is $O(|\log(\frac{1}{\delta})|)$ times that of $\approxMCNFAcore$ (using $n_u \leq |\calQ|$). }
\end{proof}

\subsection{Proof of Lemma~\ref{lemma:proba_p(q)}}

Let $\Delta(q)$ be the interval $\frac{1 \pm \kappa}{|\calL(q)|}$ and $\nabla(q)$ be the interval $\frac{|\calL(q)|}{1 \pm \kappa}$. The probability to bound is  
$$
P_1 := \Pr\left[\bigcup_{q \in \calQ^\mathsf{u}}	p(q) \not\in \Delta(q)\right].
$$ 
The key component for proving Lemma~\ref{lemma:proba_p(q)}, is a variance upper bound on the size of the sample sets. Before that, some work on $P_1$ is needed. The first important idea is that we focus on the probability of the first occurence of $p(q) \not\in \Delta(q)$, that is, when $p(q') \in \Delta(q')$ holds for all ancestors of $q$.

\begin{claim}\label{claim:first_error}
\textcolor{black}{$
p(q) \not\in \Delta(q)
$
occurs for some $q \in \calQ^\mathsf{u}$ if and only if, for some $q' \in \calQ^{> 0}$ we have that 
$p(q') \not\in \Delta(q')$ and for all $q'' \in \ancestors(q')$ we have $p(q'') \in \Delta(q'')$.}
\end{claim}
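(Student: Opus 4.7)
The plan is to establish the two implications separately. The forward (``if'') direction is immediate: the $q'$ supplied by the right-hand side is itself a state at which $p(q') \notin \Delta(q')$, so the union event on the left is witnessed by $q = q'$. No further work is needed.

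For the ``only if'' direction, I first handle the initial state. Line~c\ref{line:p_q_init} sets $p(q_I) = 1$ deterministically, and in the unrolled automaton $\calL(q_I) = \{\lambda\}$, hence $|\calL(q_I)|^{-1} = 1 \in [1-\kappa, 1+\kappa] = \Delta(q_I)$. Consequently $q_I$ is never a violator, so any state $q$ with $p(q) \notin \Delta(q)$ must lie in $\calQ^{> 0}$.

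To locate the claimed $q'$, I exploit the acyclicity of $\calA^\mathsf{u}$. Starting from an arbitrary violator $q$, consider the set $V$ of violators in $\ancestors(q) \cup \{q\}$. Because $\calA^\mathsf{u}$ is a DAG and $q \notin \ancestors(q)$, the relation ``is a strict ancestor of'' is a strict partial order on the finite set $V$; pick $q'$ to be any minimal element of $V$ under this order. Then $q'$ is a violator and none of its ancestors are violators. Since $q_I$ is not a violator, $q' \neq q_I$, so $q' \in \calQ^{> 0}$, which is exactly the desired witness.

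There is no real obstacle here; the proof is a short DAG minimality argument, with the only subtlety being the observation that $q_I$ is always safe, which excludes the degenerate case and guarantees $q' \in \calQ^{> 0}$ rather than merely $q' \in \calQ^\mathsf{u}$. The value of the claim is structural: it rewrites the union $\bigcup_q \{p(q) \notin \Delta(q)\}$ as a disjoint union over the ``first error'' state $q'$, which is precisely the form needed in the subsequent proof of Lemma~\ref{lemma:proba_p(q)} to condition on a realizable history $H(q')$ for the ancestors of $q'$ (all of whose estimates are accurate) before applying a Chebyshev-type concentration bound to $p(q')$.
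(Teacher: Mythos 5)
Your proof is correct and takes essentially the same approach as the paper: both arguments exhibit a ``first'' violator and observe that $q_I$ cannot be a violator because $p(q_I)=1=|\calL(q_I)|^{-1}\in\Delta(q_I)$. The only (immaterial) difference is that the paper takes the smallest layer index $i$ with a violator in $\calQ^i$ and uses that ancestors live in strictly earlier layers, whereas you take a minimal element of the set of violators in $\ancestors(q)\cup\{q\}$ under the strict-ancestor partial order; the transitivity of $\ancestors$ makes your minimality argument equivalent to the paper's layer-based one.
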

\begin{proof}
\textcolor{black}{The ``if'' direction is trivial. For the other direction, suppose that $p(q) \not\in \Delta(q)$ holds for some $q$. Let $i$ be the smallest integer such that there is $q \in \calQ^i$ and $p(q) \not\in \Delta(q)$. $i$ cannot be $0$ because the only state in $\calQ^0$ is $q_I$ and $p(q_I) = 1 = |\calL(q_I)|^{-1} \in \Delta(q_I)$. So $q \in \calQ^{> 0}$ and, by minimality of $i$, we have that $p(q'') \in \Delta(q'')$ for all $q'' \in \ancestors(q)$.}
\end{proof}
\textcolor{black}{\begin{align*}
P_1 =
\Pr\left[
		\bigcup_{q \in \calQ^{> 0}}
			p(q) \not\in \Delta(q) 
			\text{ and } \forall q' \in
			 \ancestors(q),\, p(q') \in
			  \Delta(q') 
\right] \tag{Claim~\ref{claim:first_error}}
\\
\leq
\sum_{q \in \calQ^{> 0}}\underbrace{\Pr\left[
			p(q) \not\in \Delta(q) 
			\text{ and } \forall q' \in \ancestors(q),\, p(q') \in
			  \Delta(q')
\right]}_{P_1(q)}
\end{align*}
We introduce the history of $q$ in $P_1(q)$. Consider the set $\calH_q$ of realizable histories for $q$ and denote by $H(q) = h$ the event that the history $h$ occurs for $q$, that is, the event that the algorithm sets $p(q')$ to $h(q')$ for all $q' \in \ancestors(q)$.
\begin{align*}
P_1(q) &= \sum_{h \in \calH_q} \Pr\left[H(q)=h \text{ and }
				p(q) \not\in \Delta(q) 
				\text{ and for all } q' \in
				 \ancestors(q),\, p(q') \in
				  \Delta(q')
		\right].
\end{align*}
The summand probabilities are zero when $h(q')$ is not in $\Delta(q')$ for any ancestor of $q$. Let $\calH^*_q$ be the subset of $\calH_q$ where $h(q') \in \Delta(q')$ holds for every $q' \in \ancestors(q)$. Then
\begin{equation*}
\begin{aligned}
P_1(q) &= \sum_{h \in \calH^*_q} \Pr\left[H(q)=h \text{ and }
				p(q) \not\in \Delta(q) 
				\text{ and for all } q' \in
				 \ancestors(q),\, p(q') \in
				  \Delta(q')
		\right]
\\
	&= \sum_{h \in \calH^*_q} \Pr\left[H(q)=h \text{ and }
				p(q) \not\in \Delta(q)
		\right]
\end{aligned} 
\end{equation*}
Let $R_j$ be the interval $\{n_s(j-1) + 1,\dots,n_sj\}$. Recall that $\rho(q) = \min(p(q_1),\dots,p(q_k))$ and $M^j(q) = (n_s\rho(q))^{-1}\sum_{r \in R_j}|\hat{S}^{r}(q)|$. Let $\rho_h(q) = \min(h(q_1),\dots,h(q_k))$ and $M^j_h(q) = (n_s\rho_h(q))^{-1}\sum_{r \in R_j}|\hat{S}^{r}(q)|$. Note that $\rho_h(q)$ is a constant. Under the event $H(q) = h$, $\estimateAndSample(q)$ sets $p(q)$ to $\min(\rho_h(q),\hat \rho_h(q))$ where $\hat \rho_h(q) = \underset{0 \leq j < n_t}{\median}(M^j_h(q))^{-1}$. We show that it is enough to focus on $\hat \rho_h(q)$.}

\begin{claim}\label{claim:hat_rho}
\textcolor{black}{When $H(q) = h$ occurs, $\hat \rho_h(q) \in \Delta(q)$ implies that $p(q) \in \Delta(q)$.}
\end{claim}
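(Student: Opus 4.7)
The plan is a simple case split on which of $\rho_h(q)$ and $\hat\rho_h(q)$ attains the minimum defining $p(q) = \min(\rho_h(q),\hat\rho_h(q))$. If $p(q) = \hat\rho_h(q)$, then the conclusion is immediate from the hypothesis $\hat\rho_h(q) \in \Delta(q)$, so the only substantive case is $p(q) = \rho_h(q) \leq \hat\rho_h(q)$, in which it suffices to prove $\rho_h(q) \in \Delta(q)$ on its own.

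For the upper bound on $\rho_h(q)$, I would simply chain $\rho_h(q) \leq \hat\rho_h(q) \leq (1+\kappa)/|\calL(q)|$, where the second inequality is the hypothesis. For the lower bound, I would combine two ingredients. First, the fact that we are working inside $\calH^*_q$ means every predecessor $q_i$ of $q$ satisfies $h(q_i) \in \Delta(q_i)$, so in particular $h(q_i) \geq (1-\kappa)/|\calL(q_i)|$. Second, for each predecessor $q_i$ of $q$ with transition $q_i \xrightarrow{b} q$, the map $w \mapsto w\cdot b$ is an injection $\calL(q_i) \hookrightarrow \calL(q)$, so $|\calL(q_i)| \leq |\calL(q)|$ (this is the monotonicity already noted in the text just before Line~e\ref{line:take_min}). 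Combining these gives $h(q_i) \geq (1-\kappa)/|\calL(q)|$ for every $i$, and taking the minimum over $i$ yields $\rho_h(q) \geq (1-\kappa)/|\calL(q)|$. Together with the upper bound, this shows $\rho_h(q) \in \Delta(q)$, closing the case analysis.

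There is no real obstacle here; the claim is a short sanity check that clamping $\hat\rho_h(q)$ from above by the already-reasonable quantity $\rho_h(q)$ at Line~e\ref{line:take_min} cannot push $p(q)$ outside the target interval. The only subtlety worth flagging is precisely the monotonicity $|\calL(q_i)| \leq |\calL(q)|$, which is what guarantees that each predecessor interval $\Delta(q_i)$ sits above $(1-\kappa)/|\calL(q)|$ and hence that minimising over predecessors cannot break the left endpoint of $\Delta(q)$.
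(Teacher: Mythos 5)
Your proposal is correct and follows the same route as the paper's proof: the same case split on $p(q) = \min(\rho_h(q),\hat\rho_h(q))$, the same chaining $\rho_h(q) \leq \hat\rho_h(q) \leq (1+\kappa)/|\calL(q)|$ for the upper bound, and the same use of $h \in \calH^*_q$ together with the monotonicity $|\calL(q_i)| \leq |\calL(q)|$ for the lower bound $\rho_h(q) \geq (1-\kappa)/|\calL(q)|$. You merely spell out the injection $w \mapsto w\cdot b$ that justifies the monotonicity, which the paper states without proof just before Line~e\ref{line:take_min}.
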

\begin{proof}
\textcolor{black}{If $p(q) = \hat \rho_h(q)$ then, trivially, $\hat \rho_h(q) \in \Delta(q)$ implies $p(q) \in \Delta(q)$. Otherwise if $p(q) = \rho_h(q)$ then $\hat \rho_h(q) \in \Delta(q)$ implies that $p(q) \leq \hat \rho_h(q) \leq (1 + \kappa)/|\calL(q)|$ and, since $h \in \calH^*_q$ guarantees that $\rho_h(q) \geq (1-\kappa)/\max_{j \in [k]} |\calL(q_j)| \geq (1-\kappa)/|\calL(q)|$, we have that $p(q) \in \Delta(q)$.}
\end{proof}
\textcolor{black}{
\begin{align*}
\Pr\left[H(q)=h \text{ and }  p(q) \not\in \Delta(q) \right]
\leq
\Pr\left[H(q)=h \text{ and }  p(q) \not\in \Delta(q) \text{ and } \hat \rho_h(q \not\in \Delta(q) \right] \tag{Claim~\ref{claim:hat_rho}}
\\
=
\Pr\left[H(q)=h \text{ and }  p(q) \not\in \Delta(q) \text{ and } \underset{1 \leq j \leq n_t}{\median}(M^j_h(q)) \not\in \nabla(q) \right] 
\\
\leq 
\Pr\left[H(q)=h \text{ and } \underset{1 \leq j \leq n_t}{\median}(M^j_h(q)) \not\in \nabla(q) \right] 
\end{align*}}

Next, let $\mathfrak{M}^j_h(q) = (\rho_h(q)n_s)^{-1}\sum_{r \in R_j}|\vZ{r}{h}{q}|$ be the counterpart of $M^j_h(q)$ in the random process.  Because $M^j_h(q)$ and $\mathfrak{M}^j_h(q)$ are constructed analogously from $(\hat{S}^{r}(q))_r$ and $(\vZ{r}{h}{q}|)_r$, respectively, Lemma~\ref{lemma:not_so_black_magic} implies that $(H(q), \median_{j \in [n_t]}(M^j_{H(q)}(q)))$ and $(H(q), \median_{j \in [n_t]}(\mathfrak{M}^j_{H(q)}(q)))$ have equal distribution. So we replace $M^j_h(q)$ by $\mathfrak{M}^j_h(q)$ and use the independence of $\mathfrak{M}^j_h$ with $H(q)$ (Fact~\ref{fact:independence}).
\begin{align*}
\Pr\Big[H(q)=h \text{ and }  \underset{0 \leq j < n_t}{\median}(M^j_h(q)) \not\in \nabla(q) \Big]
\leq \Pr[H(q) = h] \Pr\Big[\underset{0 \leq j < n_t}{\median}(\mathfrak{M}^j_h(q)) \not\in \nabla(q)\Big]
\end{align*}
We bound $\Pr[\median_{j \in [n_t]}(\mathfrak{M}^j_h(q)) \not\in \nabla(q)]$
using Chebyshev's inequality followed by Hoeffding bound. For that, we need a bound on the variance of the variables $|\vZ{r}{h}{q}|$. By Lemma~\ref{lemma:proba_first_order}, the expected value of $|\vZ{r}{h}{q}|$ is $\mu := \rho_h(q)|\calL(q)|$. Now for the variance,
\begin{align*}
\Va\left[|\vZ{r}{h}{q}| \right] \leq \Ex\left[|\vZ{r}{h}{q}|^2 \right] = \mu + \sum_{w \in \calL(q)}\sum_{\substack{w' \in \calL(q) \\ w \neq w'}} \Pr\left[w \in \vZ{r}{h}{q} \text{ and } w' \in \vZ{r}{h}{q}\right]
\\
\leq \mu + \sum_{w \in \calL(q)}\sum_{\substack{w' \in \calL(q) \\ w \neq w'}} \frac{\rho_h(q)^2}{h(\lcpn(\paths(q,w),\paths(q,w'))} \tag{Lemma~\ref{lemma:proba_second_order}}
\end{align*}
Let $R = \paths(w,q) = (q^0_w,q^1_w,q^2_w,\dots,q^{i-1}_w,q)$, with $q^0_w = q_F$. Let $R' = \paths(w',q)$ for any $w' \in \calL(q)$ distinct from $w$. Then $\lcpn(R,R')$ is one of the $q^j_w$. Recall that $I(w,q,j)$ is the set of $w' \in \calL(q)$ such that $\lcpn(R,R') = q^j_w$.
\begin{align*}
\sum_{\substack{w' \in \calL(q) \\ w \neq w'}} &\frac{\rho_h^2}{h(\lcpn(\paths(q,w),\paths(q,w'))} = \sum_{j = 0}^{i-1} |I(w,q,j)|\frac{\rho_h(q)^2}{h(q^j_w)} 
\leq \sum_{j = 0}^{i-1} \frac{\rho_h(q)^2 |\calL(q)|}{h(q^j_w)|\calL(q^j_w)|}
\tag{Lemma~\ref{lemma:derivation_path}}
\end{align*}
Since $h \in \calH^*_q$, it holds that $h(q^j_w) \in \Delta(q^j_w)$. This implies $h(q^j_w) \geq (1 - \kappa)/|\calL(q^j_w)|$ and therefore 
$$
\Va\left[|\vZ{r}{h}{q}| \right] \leq
\mu + \sum_{w \in \calL(q)} \sum_{j = 0}^{i-1} |\calL(q)| \leq \mu + \frac{\rho_h(q)^2n}{1 - \kappa}\sum_{w \in \calL(q)} |\calL(q)| = \mu + \frac{\mu^2n}{1 - \kappa}
$$
Now, back to the median of means, we have that $\Ex[\mathfrak{M}^j_h] = \mu/\rho_h(q) = |\calL(q)|$ and, by independence of the variables $\{\vZ{r}{h}{q}\}_{r \in [n_sn_t]}$, the variance is 
\begin{align*}
\Va[\mathfrak{M}^j_h(q)] = \frac{1}{\rho_h(q)^2n_s^2}\sum_{r = j\cdot n_s+1}^{(j+1)n_s} \Va[|\vZ{r}{h}{q}|]\leq \frac{1}{\rho_h(q)^2n_s} \left( \mu + \frac{\mu^2n}{1 - \kappa}\right) =   \frac{1}{n_s}\left(\frac{|\calL(q)|}{\rho_h(q)} + \frac{n |\calL(q)|^2}{1-\kappa}\right).
\end{align*}
Next, $\mathfrak{M}^j_h(q) \in \nabla(q) = \frac{|\calL(q)|}{1 \pm \kappa}$ occurs if and only if $\frac{-\kappa|\calL(q)|}{1 + \kappa} \leq \mathfrak{M}^j_h(q) - |\calL(q)| \leq \frac{\kappa|\calL(q)|}{1 - \kappa}$, which is subsumed by $|\mathfrak{M}^j_h(q) - |\calL(q)|| \leq \frac{\kappa|\calL(q)|}{1+\kappa}$. So Chebyshev's inequality gives 
\begin{align*}
&\Pr\left[\mathfrak{M}^j_h(q) \notin \nabla(q)\right] \leq \Pr\left[\big|\mathfrak{M}^j_h(q) - |\calL(q)|\big| > \frac{\kappa|\calL(q)|}{1+\kappa}\right] \leq \frac{(1+\kappa)^2}{\kappa^2 |\calL(q)|^2} \Va\left[\mathfrak{M}^j_h(q)\right]
\\
&\leq \frac{(1+\kappa)^2}{\kappa^2n_s}\left(\frac{1}{\rho_h(q)|\calL(q)|} + \frac{n}{1-\kappa}\right)
\leq \frac{(1+\kappa)^2}{\kappa^2n_s}\left(\frac{1}{1 - \kappa} + \frac{n}{1-\kappa}\right) \tag{$\rho_h(q) \geq \frac{1- \kappa}{\max_{j}|\calL(q_j)|} \geq \frac{1- \kappa}{|\calL(q)|}$}
\\
&\leq \frac{1}{4} \tag{$n_s \geq \frac{4(n+1)(1+\kappa)^2}{\kappa^2(1-\kappa)}$}
\end{align*}
By taking the median of the $\mathfrak{M}^j_h(q)$'s, we decrease the $1/4$ upper bound. Let $E_j$ be the indicator variable taking value~$1$ if and only if $\mathfrak{M}^j_h(q) \not\in \nabla(q)$ and let $\bar E = \sum_{j = 0}^{n_t - 1} E_j$. We have $\Ex[\bar E] \leq n_t/4$ so Hoeffding bound gives
\begin{align*}
\Pr\left[ \underset{0 \leq j < n_t}{\median}(\mathfrak{M}^j_h(q)) \not\in \nabla(q) \right] 
= 
\Pr\left[ \bar E > \frac{n_t}{2}\right] 
\leq 
\Pr\left[ \bar E - \Ex(\bar E) \geq \frac{n_t}{4}\right] 
\leq e^{-n_t/8} \leq \frac{1}{16|\calQ^\mathsf{u}|}
\end{align*}
where the last inequality comes from $n_t \geq 8\ln(16|\calQ^\mathsf{u}|)$. Putting everything together, we have that 
$$
P_1 \leq \sum_{q \in \calQ^{> 0}}\sum_{h \in \calH^*_q} \frac{1}{16|\calQ^\mathsf{u}|}\Pr[H(q) = h] \leq  \frac{1}{16}.
$$

\subsection*{Proof of Lemma~\ref{lemma:proba_S(q)}}

\textcolor{black}{We reuse the $\Delta(q)$ notation from before.
\begin{align*}
\Pr\left[\sum_{r,q} |S^r(q)| \geq \theta\right] \leq \underbrace{\Pr\left[\sum_{r,q} |S^r(q)| \geq \theta \text{ and } \bigcap_{q' \in \calQ^\mathsf{u}} p(q') \in \Delta(q') \right]}_{P_2} + \underbrace{\Pr\left[\bigcup_{q \in \calQ^\mathsf{u}} p(q) \not\in \Delta(q) \right]}_{P_1}
\end{align*}
Lemma~\ref{lemma:proba_p(q)} gives a $1/16$ upper bound on $P_1$. We focus on $P_2$. We start with Markov's inequality.
\begin{align*}
P_2 &= 
\Pr\left[\sum_{r,q}\bigg(|S^r(q)| \prod_{q' \in \calQ^\mathsf{u}}\mathbbm{1}\left(p(q') \in \Delta(q')\right)\bigg) \geq \theta \right] 
\leq 
\frac{1}{\theta} \cdot \Ex\left[\sum_{r,q}\bigg(|S^r(q)|\prod_{q' \in \calQ^\mathsf{u}}\mathbbm{1}\left(p(q') \in \Delta(q')\right)\bigg)\right] 
\\
&=
\frac{1}{\theta}\cdot \sum_{r,q} \Ex\left[|S^r(q)|\cdot \prod_{q' \in \calQ^\mathsf{u}}\mathbbm{1}\left(p(q') \in \Delta(q')\right)\right] 
\leq
\frac{1}{\theta}\cdot \sum_{r,q} \underbrace{\Ex\left[|S^r(q)|\cdot \mathbbm{1}\left(p(q) \in \Delta(q)\right)\right]}_{E(r,q)}
\end{align*}
To bound $E(r,q)$ we introduce the history of $q$.
\begin{align*}
E(r,q) &= \sum_{t \in \Delta(q)} \Ex\left[|S^r(q)|\cdot \mathbbm{1}(p(q) = t)\right]
= \sum_{h \in \calH_q} \sum_{t \in \Delta(q)} \Ex\left[|S^r(q)|\cdot \mathbbm{1}(p(q) = t)\cdot \mathbbm{1}(H(q) = h)\right]
\\
&= \sum_{h \in \calH_q} \sum_{t \in \Delta(q)} \Ex\left[|\vY{r}{h}{t}{q}|\cdot \mathbbm{1}(p(q) = t)\cdot \mathbbm{1}(H(q) = h)\right] \tag{Lemma~\ref{lemma:not_so_black_magic}}
\\
&= \sum_{h \in \calH_q} \sum_{t \in \Delta(q)} \Ex\left[|\vY{r}{h}{t}{q}| \right]\cdot \Pr[p(q) = t \text{ and }H(q) = h] \tag{Fact~\ref{fact:independence}}
\\
&= \sum_{h \in \calH_q} \sum_{t \in \Delta(q)} t\cdot |\calL(q)|\cdot \Pr[p(q) = t \text{ and } H(q) = h] \tag{Lemma~\ref{lemma:proba_first_order}}
\\
&\leq \sum_{h \in \calH_q} \sum_{t \in \Delta(q)} (1+\kappa)\cdot \Pr[p(q) = t \text{ and } H(q) = h] \leq 1+\kappa \tag{$t \leq (1 + \kappa)/|\calL(q)|$}
\end{align*}
Since $\theta \geq 16(1+\kappa)n_sn_t|\calQ^\mathsf{u}|$, we have $
P_2 \leq (1+\kappa)n_sn_t|\calQ^\mathsf{u}|\theta^{-1} \leq 1/16$ and $\Pr\left[\sum_{r,q} |S^r(q)| \geq \theta\right] \leq P_1 + P_2 \leq 1/8$.}

\section{Conclusion}\label{sec:conclusion}

In this paper, we present a novel FPRAS for \#NFA with a significantly improved time complexity compared to prior schemes. The algorithm has a time complexity that is quadratic in $n$, the word length, and cubic in $m$, the number of states (ignoring logarithmic factors). Notably, the time complexity is sub-quadratic with respect to the membership check's time complexity. Furthermore, we believe that implementations leveraging clever caching schemes could achieve a lower {\em effective} dependence on $m$ in practice. A natural next step would be to pursue algorithmic engineering to develop a practical and scalable \#NFA counter. Naturally, one might wonder whether further improvements to the time complexity are possible. While we do not have formal lower bounds, we are pessimistic about significant improvements. This is because the current time complexity (ignoring dependence on $\varepsilon$ and $\delta$) can be expressed as $\Tilde{O}(nm^2 \cdot nm)$, which corresponds to performing membership checks for different words across all states of the unrolled automaton, given that there are $mn$ states in the unrolled automaton. Thus, we believe reducing this dependence would require fundamentally new ideas. We leave this as a direction for future work.

\begin{acks}
Meel acknowledges the support of the Natural Sciences and Engineering Research Council of Canada (NSERC), [funding reference number RGPIN-2024-05956]; de Colnet is supported by the Austrian Science Fund (FWF),
ESPRIT project FWF ESP 235. This work was done in part while de Colnet was visiting the University of Toronto and Georgia Institute of Technology. This research was initiated at Dagstuhl Seminar 24171 on ``Automated Synthesis: Functional, Reactive and Beyond'' (\url{https://www.dagstuhl.de/24171}). We gratefully acknowledge the Schloss Dagstuhl - Leibniz Center for Informatics for providing an excellent environment and support for scientific collaboration. 
\end{acks}

\appendix

\section*{Appendix}

\subsection*{Proof of Lemmas~\ref{lemma:proba_first_order} and~\ref{lemma:proba_second_order}}

\probaFirstOrder*
\begin{proof}
Let $q \in \calQ^i$. We proceed by induction on $i$. The base case $i = 0$ is immediate since $\vY{r}{h_\emptyset}{1}{q_I} = \{\lambda\} = \calL(q_I)$ and these are the only variables for states in $\calQ^0$. Now let $i > 0$, $q \in \calQ^i$, $\pred(q) = (q_1,\dots,q_k)$ and suppose that $\Pr[w' \in \vY{r}{h'}{t'}{q'}] = t'$ holds for all $\vY{r}{h'}{t'}{q'}$ and $w' \in \calL(q')$ with $q' \in \calQ^{< i}$. Let $t_{\min} = \min(h(q_1),\dots,h(q_k))$ and $w = w' \cdot b$. We have  
\begin{align*}
\Pr\left[w \in \vY{r}{h}{t}{q}\right] 
= \Pr\left[w \in \reduce\left(\vZ{r}{h}{q}, \frac{t}{t_{\min}}\right) \mid w \in Z^r_{q,h}\right]\Pr\left[w \in \vZ{r}{h}{q}\right] 
\\
= \frac{t}{t_{\min}}\Pr\left[w \in \vZ{r}{h}{q}\right]
\end{align*}
Let $t_j = h(q_j)$ and let $h_j$ be the restriction of $h$ to $q_j$'s ancestors. The event $w \in \vZ{r}{h}{q}$ occurs if and only if $w' \in \vY{r}{h_j}{t_j}{q_j}$ for $q_j$ the first $b$-predecessor of $q$ (first with respect to $\prec$) such that $w \in \reduce(\vY{r}{h_j}{t_j}{q_j},\frac{t_{\min}}{t_j})$. So by induction
\begin{align*}
\Pr\left[w \in \vZ{r}{h}{q}\right] = \Pr\left[w \in \reduce\left(\vY{r}{h_j}{t_j}{q_j}, \frac{t_{\min}}{t_j}\right) \mid w' \in \vY{r}{h_j}{t_j}{q_j}\right]\Pr\left[w' \in \vY{r}{h_j}{t_j}{q_j}\right] 
\\
= \frac{t_{\min}}{t_j}\Pr\left[w' \in \vY{r}{h_j}{t_j}{q_j}\right] = t_{\min}
\end{align*}
It follows that $\Pr[w \in \vY{r}{h}{t}{q}] = t$.
\end{proof}

\probaSecondOrder*
\begin{proof}
We prove a stronger statement, namely, that for every $i$, for every $q$ and $q'$ two states (potentially $q = q'$) in $\calQ^i$ and every $t$ and $t'$ such that $t = t'$ when $q = q'$,  and every $w \in \calL(q)$ and $w' \in \calL(q')$, and $h$ and $h'$ two compatible histories for $q$ and $q'$, respectively, we have that 
\begin{equation}\label{equation:stronger_statement}
\Pr\left[w \in \vY{r}{h}{t}{q} \text{ and } w' \in \vY{r}{h'}{t'}{q'}\right] \leq \frac{tt'}{t^*}
\end{equation}
where $t^* = t$ if $q = q'$ and $w = w'$, and $t^* = h(\lcpn(\paths(w,q),\paths(w',q'))$ otherwise.

Inequality (\ref{equation:stronger_statement}) is straightforward when $(q,w) = (q',w')$ because then $t = t' = t^*$ and we use Lemma~\ref{lemma:proba_first_order}. In particular~(\ref{equation:stronger_statement}) holds true when $q = q' = q_I$. Now we assume $(w, q) \neq (w',q')$ and proceed by induction on $i$. The base case $i = 0$ holds true by the previous remark. 

Let $b,b' \in \{0,1\}$ such that $w = \omega \cdot b$ and $w' = \omega' \cdot b'$ (potentially $\omega = \omega' = \lambda$). Let $t_{\min}$ (resp. $t'_{\min}$) be the minimum $h(q'')$ (resp. $h'(q'')$) for $q'' \in \pred(q)$ (resp. $q'' \in  \pred(q')$). We have that 
\begin{align*}
&\Pr\left[w \in \vY{r}{h}{t}{q} \text{ and } w' \in \vY{r}{h'}{t'}{q'}\right] = \Pr\left[w \in \vZ{r}{h}{q} \text{ and } w' \in \vZ{r}{h'}{q'}\right]\\
 \times&\Pr\left[w \in \reduce\left(\vZ{r}{h}{q},\frac{t}{t_{\min}}\right) \text{, } w' \in \reduce\left(\vZ{r}{h'}{q'},\frac{t'}{t'_{\min}}\right) \,\Big|\, w \in \vZ{r}{h}{q} \text{, } w' \in \vZ{r}{h'}{q'}\right] 
\end{align*}
We have that 
$\Pr[w \in \reduce(\vZ{r}{h}{q},\frac{t}{t_{\min}}) \text{ and } w' \in \reduce (\vZ{r}{h'}{q'},\frac{t'}{t'_{\min}}) \mid w \in \vZ{r}{h}{q} \text{ and }  w' \in \vZ{r}{h'}{q'}]$ equals $\Pr[w \in \reduce(\vZ{r}{h}{q},\frac{t}{t_{\min}}) \mid w \in \vZ{r}{h}{q}]\Pr[ w' \in \reduce (\vZ{r}{h'}{q'},\frac{t'}{t'_{\min}}) \mid w' \in \vZ{r}{h'}{q'}] = \frac{tt'}{t_{\min}t'_{\min}}$ by the independence of the $\reduce$ events. So
\begin{equation}\label{equation:vZ_connection}
\Pr\left[w \in \vY{r}{h}{t}{q} \text{ and } w' \in \vY{r}{h'}{t'}{q'}\right] = \frac{tt'}{t_{\min}t'_{\min}} \Pr\left[w \in \vZ{r}{h}{q} \text{ and } w' \in \vZ{r}{h'}{q'}\right]
\end{equation}
There is a unique $b\textit{-}$predecessor $q_j$ of $q$ and a unique $b'$-predecessor $q'_l$ of $q'$ such that $\Pr[w \in \vZ{r}{h}{q} \text{ and } w' \in \vZ{r}{h'}{q'}]$ equals
\begin{align*}
&\Pr\left[\omega \in \reduce\left(\vY{r}{h_j}{t_j}{q_j},\frac{t_{\min}}{t_j}\right), \omega' \in \reduce\left(\vY{r}{h'_l}{t'_l}{q'_l},\frac{t'_{\min}}{t'_l}\right) \,\Big|\, \omega \in \vY{r}{h_j}{t_j}{q_j} \text{, } \omega' \in \vY{r}{h'_l}{t'_l}{q'_l}\right] 
\\ \times & \Pr\left[\omega \in \vY{r}{h_j}{t_j}{q_j}  \text{ and } \omega' \in \vY{r}{h'_l}{t'_l}{q'_l}\right]
\end{align*}
where $h_j$ denotes the restriction of $h$ to $\ancestors(q_j)$, $h'_l$ denotes the restriction of $h'$ to $\ancestors(q'_l)$, $t_j = h(q_j)$ and $t'_l = h'(q'_l)$.
By independence of the $\reduce$ events, the first probability equals $\frac{t_{\min}t'_{\min}}{t_jt'_l}$. The states $q_j$ and $q'_l$ are in $\calQ^{i-1}$ so, by induction, $\Pr[\omega \in \vY{r}{h_j}{t_j}{q_j}  \text{ and } \omega' \in \vY{r}{h'_l}{t'_l}{q'_l}]$ is bounded from above by $\frac{t_jt'_l}{\tau}$ where $\tau = t_j$ if $(q_j,t_j,\omega) = (q'_l,t'_l,\omega')$ and $\tau = h(\lcpn(\paths(\omega,q_j),(\omega',q'_l)))$ otherwise. So 
$$
\Pr[w \in \vZ{r}{h}{q} \text{ and } w' \in \vZ{r}{h'}{q'}] \leq \frac{t_{\min}t'_{\min}}{\tau}.
$$
If $(q_j,t_j,\omega) = (q'_l,t'_l,\omega')$ then $q \neq q'$ for otherwise we would have $b = b'$ and $(q,w) = (q',w')$. So $\lcpn(\paths(w,q),\paths(w',q')) = q_j = q'_l$ and $t^* = h(q_j) = h(q'_l) = t'_l = t_j = \tau$. If instead $(q_j,t_j,\omega) \neq (q'_l,t'_l,\omega')$ then $\lcpn(\paths(w,q),\paths(w',q')) = \lcpn(\paths(\omega,q_j),\paths(\omega',q'_l))$ and, again $t^* = \tau$. It follows that $\Pr[w \in \vZ{r}{h}{q} \text{ and } w' \in \vZ{r}{h'}{q'}] \leq \frac{t_{\min}t'_{\min}}{t^*}$ and thus $\Pr[w \in \vY{r}{h}{t}{q} \text{ and } w' \in \vY{r}{h'}{t'}{q'}] \leq \frac{tt'}{t^*}$. The matching inequality for the $\vZ{r}{h}{q}$ variables follows from (\ref{equation:vZ_connection}).
\end{proof}

\notSoBlackMagic*

We make a sequence of modifications to $\approxMCNFAcore^*$ to transform it into the random process. The modifications are on $\estimateAndSample$. Initially the procedure is as follows:

\renewcommand{\thealgocf}{}

\begin{minipage}{\textwidth}
\begin{algorithm}[H]
compute $\hat{S}^r(q)$ for all $r$ using $\{ S^r(q') \mid q' \in \pred(q)\}$\\
compute $p(q)$ using $\{\hat{S}^r(q)\}_r$\\
compute $S^r(q)$ for all $r$ using $p(q)$ and $\hat{S}^r(q)$
\caption*{$\estimateAndSample(q)$}
\end{algorithm}
\end{minipage}

\paragraph*{Recording histories.} For every state $q$, we keep the history for the ancestors of $q$. Formally, we maintain a variable $H(q)$ for every state $q$. Initially $H(q)$ is empty for all $q$. After the value $p(q)$ is computed the variable $H(q')$ is updated for all descendants $q'$ of $q$ as follow: $H(q') = H(q') \cup (p \mapsto p(q))$. Thus, just before processing $q'$, $H(q')$ is the history for $q'$. Clearly, keeping unused additional variables does not modify the output of the algorithm. The procedure $\estimateAndSample$ now is as follow:
 
\begin{minipage}{\textwidth}
\begin{algorithm}[H]
compute $\hat{S}^r(q)$ for all $r$ using $\{ S^r(q') \mid q' \in \pred(q)\}$\\
compute $p(q)$ using $\{\hat{S}^r(q)\}_r$ \textcolor{red}{and update $H$}\\
compute $S^r(q)$ for all $r$ using $p(q)$ and $\hat{S}^r(q)$
\caption*{$\estimateAndSample(q)$}
\end{algorithm}
\end{minipage}

\paragraph*{Introducing clone variables.} For every realizable history $h$ for $q$, and every $r \in [n_sn_t]$, the algorithm now has a variable $\hat{S}^r_h(q)$ and for every $t$ that is a possible candidate for $p(q)$, it also has a variable $S^r_{h,t}(q)$. Initially all $\hat{S}^r_{h}(q)$ and $S^r_{h,t}(q)$ are empty. When the algorithm computes the value for $p(q)$, $H(q)$ has already been set. Once $\hat{S}^r(q)$ is computed we copy its content in $\hat{S}^r_{H(q)}(q)$ and once $S^r(q)$ is computed we copy its content in $S^r_{H(q),p(q)}(q)$. All these steps are superfluous since we are just assigning variables that, so far, are not used. Note that
\begin{align*}
(H(q),\hat S^r(q)) = (H(q),\hat S^r_{H(q)}(q)) \quad  \text{ and } \quad (H(q),p(q), S^r(q)) = (H(q),p(q),S^r_{H(q),p(q)}(q)).
\end{align*}
 
\begin{minipage}{\textwidth}
\begin{algorithm}[H]
compute $\hat{S}^r(q)$ for all $r$ using $\{ S^r(q') \mid q' \in \pred(q)\}$\\
\textcolor{red}{copy $\hat{S}^r(q)$ to $\hat{S}^r_{H(q)}(q)$ for all $r$}\\
compute $p(q)$ using $\{\hat{S}^r(q)\}_r$ and update $H$\\
compute $S^r(q)$ for all $r$ using $p(q)$ and $\hat{S}^r(q)$\\
\textcolor{red}{copy $S^r(q)$ to $S^r_{H(q),p(q)}(q)$ for all $r$}
\caption*{$\estimateAndSample(q)$}
\end{algorithm}
\end{minipage}

\paragraph*{Working with the clone variables.}
For a fixed $q$, to compute $p(q)$, $\{\hat S^r(q)\}_r$ and $\{S^r(q)\}_r$, the algorithm uses $\{p(q')\}_{q' \in \pred(q)}$ and $\{\{S^r(q')\}_r\}_{q' \in \pred(q)}$. By the equalities above, we can use the $S^r_{h,t}(q)$ instead of the $S^r(q)$ to do the computation.  We just need to retrieve the correct sets using $H(q)$.

\begin{minipage}{\textwidth}
\begin{algorithm}[H]
compute $\hat{S}^r(q)$ for all $r$ using $\{\textcolor{red}{S^r_{H(q'),p(q')}(q')} \mid q' \in \pred(q)\}$\\
copy $\hat{S}^r(q)$ to $\hat{S}^r_{H(q)}(q)$ for all $r$\\
compute $p(q)$ using $\{\textcolor{red}{\hat{S}^r_{H(q)}(q)}\}_r$ and update $H$\\
compute $S^r(q)$ for all $r$ using $p(q)$ and \textcolor{red}{$\hat{S}_{H(q)}^r(q)$}\\
copy $S^r(q)$ to $S^r_{H(q),p(q)}(q)$ for all $r$
\caption*{$\estimateAndSample(q)$}
\end{algorithm}
\end{minipage}
\medskip

\noindent But then we might as well compute $\hat{S}^r_{H(q)}(q)$ and $S^r_{H(q),p(q)}(q)$ directly and then copy their content to $S^r(q)$ and $\hat{S}^r(q)$.

\begin{minipage}{\textwidth}
\begin{algorithm}[H]
compute \textcolor{red}{$\hat{S}^r_{H(q)}(q)$} for all $r$ using $\{S^r_{H(q'),p(q')}(q') \mid q' \in \pred(q)\}$\\
compute $p(q)$ using $\{\hat{S}^r_{H(q)}(q)\}_r$ and update $H$\\
compute \textcolor{red}{$S^r_{H(q),p(q)}(q)$} for all $r$ using $\hat{S}^r_{H(q)}(q)$\\
copy \textcolor{red}{$S^r_{H(q),p(q)}(q)$ to $S^r(q)$} for all $r$\\
copy \textcolor{red}{$\hat{S}^r_{H(q)}(q)$ to $\hat{S}^r(q)$} for all $r$
\caption*{$\estimateAndSample(q)$}
\end{algorithm}
\end{minipage}
\medskip

\noindent We still have $(H(q),\hat S^r(q)) = (H(q),\hat S^r_{H(q)}(q))$ and $(H(q),p(q), S^r(q)) = (H(q),p(q),S^r_{H(q),p(q)}(q))$.

\paragraph*{Filling unused clone variables.} When processing a state $q$ given $H(q)$ we know the sets $\hat{S}^r_{h}(q)$ for $h \neq H(q)$ will not be filled and will not be used to compute the output of the algorithm. Similarly once $p(q)$ is found, we know the sets $S^r_{h,t}(q)$ for $(h,t) \neq (H(q),p(q))$ will not be filled nor used to compute the output of the algorithm. So additional work can be done to decide the content of these $\hat{S}^r_{h}(q)$ and $S^r_{h,t}(q)$. In particular, if $\pred(q) = (q_1,\dots,q_k)$ then we can set $\rho_h(q) = \min(h(q_1),\dots,h(q_k)$ and $\hat{S}^r_{h}(q) = \union(q,\reduce(S^r_{h_1,h(q_1)}(q_1),\rho^{h}/h(q_1)),\dots,\reduce(S^r_{h_k,h(q_k)}(q_k),\rho^{h}/h(q_k)))$ and $S^r_{h,t}(q) = \reduce(\hat{S}^r_{h}(q),t/\rho)$, where $h_i$ denote the restriction of $h$ to the ancestors of $q_i$.

\begin{minipage}{0.9\textwidth}
\begin{algorithm}[H]
compute $\hat{S}^r_{H(q)}(q)$ for all $r$ using $\{ S^r_{H(q'),p(q')}(q') \mid q' \in \pred(q)\}$\\
\textcolor{red}{compute $\hat{S}^r_{h}(q)$ for all $r$ and $h$ using $\{ \{S^r_{h,t}(q') \}_{r,h,t} \mid q' \in \pred(q)\}$}\\
compute $p(q)$ using $\{\hat{S}^r_{H(q)}(q)\}_r$ and update $H$\\
compute $S^r_{H(q),p(q)}(q)$ for all $r$ using $\hat{S}^r_{H(q)}(q)$\\
\textcolor{red}{compute $S^r_{h,t}(q)$ for all $r$ and $h \neq H(q)$ using $\{\hat{S}^r_{h}(q)\}_h$}\\
copy $S^r_{H(q),p(q)}(q)$ to $S^r(q)$ for all $r$\\
copy $\hat{S}^r_{H(q)}(q)$ to $\hat{S}^r(q)$ for all $r$
\caption*{$\estimateAndSample(q)$}
\end{algorithm}
\end{minipage}
\medskip

\noindent But then the computation of $\hat S^r_h$ and $S^r_{h,t}$ is no different from what is done to compute $\hat{S}^r_{H(q)}(q)$ and $S^{r}_{H(q),p(q)}(q)$ so the procedure is equivalent to 

\begin{minipage}{\textwidth}
\begin{algorithm}[H]
compute $\hat{S}^r_{h}(q)$ for all $r$ \textcolor{red}{and all $h$ (including $h= H(q)$)} using $\{ S^r_{h',p(q')}(q')\}_{h',q' \in \pred(q)}$\\
compute $p(q)$ using $\{\hat{S}^r_{H(q)}(q)\}_r$ and update $H$\\
compute $S^r_{h,t}(q)$ for all $r$, \textcolor{red}{and all $h$ and $t$ (including $h = H(q)$ and $t = p(q)$)} using $\{\hat{S}^r_{h}(q)\}_h$\\
copy $S^r_{H(q),p(q)}(q)$ to $S^r(q)$ for all $r$\\
copy $\hat{S}^r_{H(q)}(q)$ to $\hat{S}^r(q)$ for all $r$
\caption*{$\estimateAndSample(q)$}
\end{algorithm}
\end{minipage}
\medskip

which is equivalent to

\begin{minipage}{\textwidth}
\begin{algorithm}[H]
compute $\hat{S}^r_{h}(q)$ for all $r$ and all $h$ (including $h= H(q)$) using $\{ S^r_{h',p(q')}(q')\}_{h',q' \in \pred(q)}$\\
compute $S^r_{h,t}(q)$ for all $r$, $h$ and $t$ (including $h = H(q)$ and $t = p(q)$) using $\{\hat{S}^r_{h}(q)\}_h$\\
compute $p(q)$ using $\{\hat{S}^r_{H(q)}(q)\}_r$ and update $H$\\
copy $S^r_{H(q),p(q)}(q)$ to $S^r(q)$ for all $r$\\
copy $\hat{S}^r_{H(q)}(q)$ to $\hat{S}^r(q)$ for all $r$
\caption*{$\estimateAndSample(q)$}
\end{algorithm}
\end{minipage}
\medskip

Now the computation of $p(q)$ and $H$ is not needed to compute $\hat{S}^r_{h}(q)$ and $S^r_{h,t}(q)$. So we have transformed the algorithm into the random process with $\hat{S}^r_h(q) =  \vZ{r}{h}{q}$ and $S^r_{h,t}(q) = \vY{r}{h}{t}{q}$ (the first two lines of $\estimateAndSample(q)$). The variables $H(q)$ and $p(q)$ do the bookkeeping and allow us to retrieve $S^r(q)$ as $S^r_{H(q),p(q)}(q)$ and $\hat S^r(q)$ has $\hat S^r_{H(q)}(q)$ for every $q$. So we indeed have that 
$$
(H(q),\hat S^r(q)) = (H(q),\hat S^r_{H(q)}(q)) = (H(q),\vZ{r}{H(q)}{q})
$$
and
$$
(H(q),p(q),S^r(q)) = (H(q),p(q),S^r_{H(q),p(q)}(q)) = (H(q),p(q),\vY{r}{H(q)}{p(q)}{q}).
$$

\end{document}